\newcommand{\alp}{A}
\newcommand{\alpo}{B}
\let\epsilon\varepsilon
\keywords{Transducers, Regular functions, Reversibility, Composition, SSTs}
\title{Reversible Transducers over Infinite Words}
\author{Luc Dartois}{Université Paris Est Creteil, LACL, F-94010 Créteil, France}{luc.dartois@u-pec.fr}{https://orcid.org/0000-0001-9974-1922}{}
\author{Paul Gastin}{Université Paris-Saclay, ENS Paris-Saclay, CNRS, LMF, 91190, Gif-sur-Yvette, France \and CNRS, ReLaX, IRL 2000, Siruseri, India}{paul.gastin@lmf.cnrs.fr}
{https://orcid.org/0000-0002-1313-7722}{}
\author{Loïc {Germerie Guizouarn}}{Université Paris Est Creteil, LACL, F-94010 Créteil, France}{loic.germerie-guizouarn@u-pec.fr}{https://orcid.org/0000-0002-3843-5427}{}
\author{R. Govind}{Uppsala University, Sweden}{govind.rajanbabu@it.uu.se}{https://orcid.org/0000-0002-1634-5893}{}
\author{Shankaranarayanan Krishna}{Indian Institute of Technology Bombay, Mumbai, India}{krishnas@cse.iitb.ac.in}{https://orcid.org/0000-0003-0925-398X}{}
\authorrunning{L. Dartois, P. Gastin, L. Germerie Guizouarn, R. Govind and S. Krishna}
\begin{document}
\maketitle

\begin{abstract}
Deterministic two-way transducers capture the class of regular functions. The efficiency of composing two-way transducers has a direct implication in algorithmic problems related to reactive synthesis, where transformation specifications are converted into equivalent transducers. These specifications are presented in a modular way, and composing the resultant machines simulates the full specification. 
An important result by Dartois et al.~\cite{DFJL17} shows that composition of two-way transducers enjoy a polynomial composition when the underlying transducer is reversible, that is, if they are both deterministic and co-deterministic. This is a major improvement over general deterministic two-way transducers, for which composition causes a doubly exponential blow-up in the size of the inputs in general. Moreover, they show that reversible two-way transducers have the same expressiveness as deterministic two-way transducers. However, the question of expressiveness of reversible transducers over infinite words is still open. 

In this article, we introduce the class of reversible two-way transducers over infinite words and show that they enjoy the same expressive power as deterministic two-way transducers over infinite words. This is done through a non-trivial, effective construction inducing a single exponential blow-up in the set of states. Further, we also prove that composing two reversible two-way transducers over infinite words incurs only a polynomial complexity, thereby providing foundations for efficient procedure for composition of transducers over infinite words.
\end{abstract}

\begin{gpicture}[name=automaton, ignore]
	\gasset{Nw=10,Nadjust=w,Nadjustdist=2,Nh=6,loopwidth=5,loopheight=6}
	\unitlength=1.1mm
	\node[Nmarks=i,linecolor=black](1)(0, 0){$1$}
	\node(2)(17, 10){$2$}
	\node(3)(17,-10){$3$}

	\drawedge(1,2){$b: 1$}
	\drawedge(2,3){$a: 1$}
	\drawedge[ELdist=-8, ELpos=35](1,3){$a: 1$}
	\drawloop[loopangle=0](3){$a: 0, b: 0$}
\end{gpicture}

\section{Introduction}
Transducers extend finite state automata with outputs. While finite state automata are computational models for regular languages, transducers are computational models 
for transformations between languages. Finite state automata remain robust in their expressiveness accepting regular languages across various descriptions
like  allowing two-way-ness, non-determinism and otherwise. However,  
this is not the case with transducers. Non-deterministic transducers realize relations
while deterministic transducers realize functions. Likewise, two-way transducers 
are strictly more expressive than one-way transducers: for instance, 
the function  reverse which computes the reverse of all input words 
in its domain is realizable by deterministic two-way transducers, but not by one-way transducers.  

One of the cornerstone results  of formal language theory is the beautiful connection 
which establishes that the class of regular languages corresponds to those recognized by finite state automata, to the class of languages definable in MSO logic, and to the class of languages whose syntactic monoid is finite.  Engelfriet and Hoogeboom 
\cite{EH01} generalized this correspondence between machines, logics and algebra in the case of regular languages to regular transformations. They showed that  regular transformations are those which are captured by two-way transducers and by Monadic second-order (MSO) transductions a la Courcelle~\cite{Cou94}. 
Inspired by this seminal work of Engelfriet and Hoogeboom, there has been an increasing interest over recent years in  characterizing the class of functions defined by deterministic two-way transducers~\cite{AlurFreilichRaghothaman14,BR-DLT18, DGK-lics18}. 

One such characterization is that of \emph{reversible} two-way transducers \cite{DFJL17} over finite words. Reversible transducers are those which are deterministic and also co-deterministic.  While determinism says that  any given state, on any given input symbol, 
does not transition to two distinct states, co-determinism says that no two distinct states can transition to  the same state on any input symbol. Reversibility makes the composition  
operation in two-way transducers very efficient : the composition of reversible transducers has polynomial state
complexity. This makes reversible transducers a very attractive formalism in synthesis where  specifications are given as relations of input-output pairs. \cite{DFJL17} showed that reversible two-way transducers capture the class of regular transformations. However, reversible transducers over infinite words have not been studied.

In another line of work, \cite{AFT12} initiated the study of  transformations on  infinite words. They considered functional, copy-less streaming string transducers (SST) with a M\"uller acceptance condition. 
An SST is a one-way automaton with registers; the outputs of each transition are stored in registers as words over the register names and the output alphabet.   In a run, the contents of the registers are composed. The  M\"uller acceptance condition is defined as follows: in any accepting run which settles down in a M\"uller set, the output is defined as a concatenation $x_1,x_2, \dots x_n$ 
of registers  where only $x_n$ is updated by appending words to $x_n$.  \cite{AFT12} 
proved the equivalence of this class of SST to deterministic two-way 
transducers with M\"uller acceptance and having an $\omega$-regular look-ahead. They also showed that these are equivalent to MSO transductions over infinite words. The $\omega$-regular look-aheads were necessary to obtain the expressiveness of MSO transductions on infinite words.  

In this paper, we continue the study of two-way transducers over infinite words. We introduce 
two-way transducers  with the parity acceptance condition ($\dbt$).
The main result of our paper is a  non-trivial generalization of \cite{DFJL17}, where we show that $\dbt$'s can be made reversible, obtaining $\rbt$ (two-way reversible transducers with parity acceptance).   Our conversion of $\dbt$ to $\rbt$ incurs a single exponential blow-up, and goes via   a new kind of SSTs that we introduce, namely,  copyless  SST with a parity acceptance condition ($\cbsst$). The parity condition used in both machines employs a  finite set of coloring functions, $c_1, \dots, c_k$, where each $c_i$ assigns  to the transitions of the underlying machine, a natural number. An infinite run $\rho$ is accepting if the minimum number which appears infinitely often  is even in all the $c_i$'s.

\begin{enumerate}
	\item We first show that starting from a 
	$\dbt$, we can obtain an equivalent $\cbsst$ where the number of states 
	of the $\cbsst$ is exponential in the number of states and coloring functions    
	of the  $\dbt$. The proof of this is a fairly non-trivial generalization of the classical Shepherdson construction~\cite{Shepherdson59} which goes from two-way automata to one-way automata.
	
	\item Then we show that, starting from a $\cbsst$ $\A$, we can obtain an equivalent 
	$\rbt$ $\mathcal{B}$ which is polynomial in the number of states and registers 
	of $\A$. This construction is a bit technical : we show that $\mathcal{B}$ is obtained as the composition of a deterministic one-way parity transducer $\mathcal{D}$  and an  $\rbt$ $\mathcal{F}$.  To complete the proof, we show that (i) $\mathcal{D}$  can be converted to an equivalent $\rbt$ with polynomial blow-up, and (ii)  
	$\rbt$s are closed under composition with a polynomial complexity. 
\end{enumerate}

Thus,  our results extend~\cite{DFJL17} to the setting of infinite words, retaining the expressivity of deterministic machines and a polynomial complexity for composition.
The main challenges  when going to the infinite word setting is in dealing with the acceptance conditions. Unlike the finite word setting where acceptance is something to take care of at the end, here we need to deal with it throughout the run. The difficulty in doing this 
comes from the fact that we cannot compute the set of co-accessible states at a given input position. It must be noted that in the proof \cite{DFJL17} for finite words, the equivalent reversible transducer was constructed by computing the set of accessible and co-accessible states at each position of the input word. Indeed,  computing the co-accessible states
at each input position requires an infinite computation or an oracle, and hence, the proof of \cite{DFJL17} fails for infinite words. Instead, we introduce the  
intermediate model of $\cbsst$ where we employ a dedicated ``out'' register that serves as the output tape.  

Our result of extending \cite{DFJL17} can be seen as a positive contribution to reactive synthesis : transforming  specifications over infinite word transformations to an equivalent  $\rbt$  can give rise to efficient solutions to algorithmic problems on transformation specifications. To the best of our knowledge, there is no such translation for infinite words; the    closest result in this direction, but for finite words, is \cite{DBLP:conf/lics/DartoisGGK22}, which gives an efficient procedure for converting specifications  given as RTE (regular transducer expressions) to reversible transducers. 

\smallskip 

Continuing with transformations on infinite words, \cite{DFKL22} investigated  a practical question on functions over infinite words, namely, ``given a function over infinite words, is it computable?''.  They established that the decidability of this question  boils down to checking the continuity of these functions. Further, they conjectured 
that any continuous regular function can be computed by a deterministic two-way transducer  
over infinite words without $\omega$-regular look-ahead. \cite{CD22} took up this conjecture and showed that any continuous rational function over infinite words can be extended to a function which is computable by deterministic two-way transducers  
over infinite words without $\omega$-regular look-ahead. Most recently, \cite{CDFW23} 
conjectured that deterministic two-way transducers with the B\"uchi acceptance condition capture the class of continuous, regular functions. 

\smallskip

Apart from its application to synthesis, $\dbt$ also realize continuous functions. 
This implies that the  conjecture of \cite{CDFW23} fails, since $\dbt$ are more expressive than the class of deterministic two-way transducers with B\"uchi acceptance. A simple example 
illustrating this is the function $f: \{a,b\}^{\omega} \rightarrow \{a,b\}^{\omega}$ such that $f(u)=u$ if the number of $a$'s in $u$ is finite, and is undefined otherwise. $f$ is continuous since it is continuous on its domain; $f$ cannot be realized by a deterministic transducer with B\"uchi acceptance, but it can be realized by a $\dbt$. 
Note however that the extension is only able to refine the domain, and not the production.
In particular, by simply dropping the accepting condition of an $\rbt$, we obtain a function realized by a deterministic two-way transducer with a \buchi condition.
Moreover, our constructions (going from deterministic two-way to reversible) for this class become simpler. 
And conversely, we show that two-way reversible transducers with no acceptance condition  have the same expressiveness as those with the B\"uchi acceptance condition, which in turn have the same expressiveness as two-way deterministic transducers with  B\"uchi acceptance.

\smallskip 

\noindent{\bf{Organization of the Paper}}. Section \ref{sec:prelim} defines the two models we introduce in the paper, namely, $\cbsst$ and $\dbt$. Section \ref{sec:main} states our main result : starting from a $\dbt$, we can obtain an equivalent $\rbt$. Most of the  remaining sections are devoted to the proof of this result. In sections \ref{sec:compose} and \ref{sec:owd2twr} respectively, we prove the closure under composition of $\rbt$s with a polynomial complexity and the polynomial conversion from one-way parity transducers to $\rbt$. Section \ref{sec:cpsst2rpt} uses both these results, where we 
describe the conversion from $\cbsst$ to $\rbt$ with a polynomial complexity. Section 
\ref{sec:dpt2cpsst} contains one of the most non-trivial constructions of the paper, namely, going from $\dbt$ to $\cbsst$ with a single  exponential blow-up. Finally, Section \ref{sec:cont} wraps up by discussing the connection between continuity and the topological closure of $\dbt$s.

\section{Preliminaries}
\label{sec:prelim}
Let $\alp$ be an \emph{alphabet}, i.e., a finite set of letters. 
A finite or infinite word $w$ over $\alp$ is a (possibly empty) sequence
$w=a_{0}a_{1}a_{2}\cdots$ of letters $a_{i}\in\alp$.
The set of all finite (resp.\ infinite) words is denoted by $\alp^*$ (resp.\ 
$\alp^{\omega}$), with $\epsilon$ denoting the empty word. We let 
$\alp^{\infty}=\alp^{*}\cup\alp^{\omega}$.
A \emph{language} is a subset of the set of all words.

\subsection*{Two-way Parity Automata and Transducers}
Let $\alp$ be a finite alphabet and let $\leftend\notin \alp$ be a left delimiter symbol.
We write $\alp_{\leftend}=\alp \cup \{\leftend\}$.

A two-way parity automaton ($\tfa$) is a tuple $\Aa=(Q, \alp, \Delta, q_0,\chi)$, 
where the finite set of states $Q$ is partitioned into a set of forward states $Q^+$ and a
set of backward states $Q^-$.  The initial states is $q_0 \in Q^+$, $\Delta \subseteq Q
\times \alp_{\leftend} \times Q$ is the transition relation and $\chi$ is a finite set of 
coloring functions $\cc\colon\Delta\to\Nat$ which are used to define the acceptance 
condition.
We assume that if $(p, \leftend, q) \in \Delta$, then $p\in Q^-$ and $q \in Q^+$: on
reading $\leftend$, the reading head does not move.

A configuration of a $\tfa$ over an input word $w\in\alp^{\omega}$ is some $\leftend ~u
~p~v$ where $p \in Q$ is the current state and $u \in \alp^*$, $v \in \alp^{\omega}$ with
$w=uv$.
The configuration admits several successor configurations as defined below.
\begin{enumerate}
    \item If $p \in Q^+$, then the input head reads the first symbol $a\in A$ of the 
    suffix $v=av'\in\alp^{\omega}$.
    Let $(p, a, q) \in \Delta$ be a transition.  If $q \in Q^+$, then the successor
    configuration is $\leftend ~ua~q~v'$.  Likewise, if $q \in Q^-$, then the successor
    configuration is $\leftend ~u~q~av'$.  Thus, if the current and target states are both
    in $Q^+$, then the reading head moves right.  If the current state is forward and the
    target state is backward, then the reading head does not move.

    \item If $p \in Q^-$, then the input head reads the last symbol $a\in\alp_{\leftend}$
    of the prefix $\leftend u$.
    Let $(p, a, q) \in \Delta$ be a transition.  If $q \in Q^+$, the successor
    configuration is $\leftend~u~q~v$.  If $q \in Q^-$ then $a\neq\leftend$, we write
    $u=u'a$ with $u'\in\alp^*$ and the successor configuration is $\leftend~u'~q~av$.
    Thus, if the current state is backward and the target state is forward, the reading
    head does not move.  If both states are backward, then the reading head moves left.
\end{enumerate}

A run $\rho$ of $\Aa$ is a finite or infinite sequence of configurations starting from an
initial configuration $\leftend~\varepsilon~q_0~w$ where $w \in \alp^{\omega}$ is the input word:
$$
\leftend q_0w = \leftend u_{0}q_0v_{0} \xrightarrow{} \leftend u_1q_1v_1 \xrightarrow{}
\leftend u_2q_2v_2 \xrightarrow{} \leftend u_3q_3v_3 \xrightarrow{} \leftend u_4q_4v_4 \cdots
$$
We say that $\rho$ reads the whole word $w\in\alp^{\omega}$ if $\supr\{|u_n| \mid
n>0\}=\infty$.  The set of transitions used by $\rho$ infinitely often is denoted
$\infi(\rho)\subseteq \Delta$.
The word $w$ is accepted by $\Aa$, i.e., $w\in\dom{\Aa}$ if $\rho$ reads the whole word
$w$ and $\min(\cc(\infi(\rho)))$ is even for all $c\in\chi$.
Note that, even though we call our machine parity, we in fact consider a conjunction of
parity conditions.  This allows to easily describe intersection of automata or composition
of transducers.

The parity automaton $\Aa$ is called
\begin{itemize}[nosep]
  \item \emph{one-way} if $Q^{-}=\emptyset$,

  \item \emph{deterministic} if for all pairs $(p,a)\in Q\times\alp_{\leftend}$, there is
  at most one state $q=\delta(p,a)$ such that $(p,a,q)\in\Delta$, in this case we
  identify the transition relation $\Delta$ with the partial function $\delta\colon
  Q\times\alp\to Q$,
	
  \item \emph{co-deterministic} if for all pairs $(q,a)\in Q\times\alp_{\leftend}$, there is
  at most one state $p$ such that $(p,a,q)\in\Delta$,
	
  \item \emph{reversible} if it is both deterministic and co-deterministic.
\end{itemize}

A two-way parity transducer $\twt$ is a tuple $\Tt=(Q, \alp, \Delta, q_0, \chi, \alpo,
\lambda)$ where $\Aa=(Q, \alp, \Delta, q_0, \chi)$ is a \emph{deterministic} $\tfa$,
called the underlying parity automaton of $\Tt$, $\alpo$ is a finite \emph{output}
alphabet, and $\lambda\colon\Delta\rightarrow \alpo^*$ is the output function.  As in
the case of $\tfa$, a $\twt$ is one-way/co-deterministic/reversible if so is the
underlying parity automaton.  Let $\dbt$ (resp.\ $\rbt$) denote two-way (deterministic) 
(resp.\ reversible) parity transducers.  The notion of run and accepting run is
inherited from the underlying $\tfa$.  For $w \in \alp^{\omega}$ such that $w \in
\dom{\Aa}$, let the accepting run $\rho$ of $w$ be
$$
\leftend q_0w = \leftend u_{0}q_0v_{0} \xrightarrow{t_{1}} \leftend u_1q_1v_1 
\xrightarrow{t_{2}} \leftend u_2q_2v_2 \xrightarrow{t_{3}} \leftend u_3q_3v_3 
\xrightarrow{t_{4}} \leftend u_4q_4v_4 \cdots
$$
where $t_{i}\in\Delta$ is the $i$-th transition taken during the run, i.e., from 
$\leftend u_{i-1}q_{i-1}v_{i-1}$ to $\leftend u_{i}q_{i}v_{i}$.
For $i>0$, let $\gamma_{i}=\lambda(t_i)$ be the output produced by the $i$-th
transition of $\rho$.  If $\gamma_1\gamma_2\gamma_3\gamma_4\cdots\in\alpo^{\omega}$, then
$w\in\dom{\Tt}$ and we let $\sem{\Tt}(w)=\gamma_1\gamma_2\gamma_3\gamma_4\cdots$ be the
output word computed by $\Tt$.  Hence, the semantics of a $\twt$ is a partial function
$\sem{\Tt}\colon\alp^{\omega}\to\alpo^{\omega}$ with $\dom{\Tt}\subseteq\dom{\Aa}$.

 \begin{figure}
	
	\begin{tikzpicture}[scale=0.8]
		
		\node[state, initial, initial text=, scale=0.7] (p) at (0,0) {$+$};
		\node[state, scale=0.7] (q) at (2.5,0) {$-$};
		\node[state, scale=0.7] (r) at (5,0) {$+$};
		
		\path[->] (p) edge  [loop above] node {$a|a : 0$} (p)
		(q) edge  [loop above] node {$a|a : 1$} (q)
		(r) edge  [loop above] node {$a|\epsilon : 1$} (r)
		(p) edge [bend left=7,above] node {\small $\#|\#$ : 1} (q)
		(q) edge [bend left=7,above] node {\small $\#,\vdash|\epsilon : 1$} (r)
		(r) edge [bend left,below] node {\small $\#|\# : 0$} (p);
		
		\node[initial,initial text=,initial below,state, scale=0.7] (sst) at (11,0) {};
		
		\path[->] (sst) edge  [loop left] node {\small $a\begin{cases}\outreg=\outreg\ a \\
				X=aX \end{cases}$} (sst)
		(sst) edge  [loop right] node {\small $\#\begin{cases}\outreg=\outreg\#X\#\\
				X=\epsilon\end{cases}$} (sst);

	\end{tikzpicture} 
 \caption{An example of $\rbt$ (left) and $\cbsst$ (right) defining the function \emph{map-copy-reverse} ($mcr$) defined on $(A\uplus \{\#\})^\omega \to (A\uplus \{\#\})^\omega$  by:
   $mcr(u_1\#u_2\#...)=u_1\#\tilde{u_1}\#u_2\#\tilde{u_2}\#...$ for words with an infinite number of letter $\#$, and   
$mcr(u_1\#\ldots\#u_n\#u)=u_1\#\tilde{u_1}\#\ldots u_n\#\tilde{u_n}\#u$ if $u\in A^\omega$, where $\tilde{v}$ denotes the mirror image of $v$.
There is only one coloring function, denoted on the transitions after the colon. The color of all transitions of the \( \cbsst \) is \( 0 \).}
\end{figure}
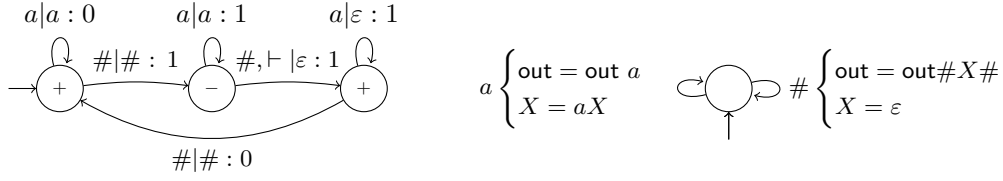

\subsection*{Parity Streaming String Transducers}

Let $\R$ be a finite set of variables called \emph{registers}.
A \emph{substitution} of $\R$ into an alphabet $\alpo$ is a mapping  
$\sigma\colon\R\to(\R\uplus\alpo)^*$.
It is called \emph{copyless} if for all $r\in\R$, $r$ appears at most once in the
concatenation of all the $\sigma(r')$ for $r'\in\R$.
We denote by  $\subst{\alpo}{\R}$ the set of all copyless substitutions of $\R$ into $\alpo$.

A copyless parity Streaming String Transducers ($\cbsst$) is given by a tuple
$\Tt=(Q,\alp,\Delta,q_{0},\chi,\alpo,\R,\outreg,\lambda)$ where
$\Aa=(Q,\alp,\Delta,q_{0},\chi)$ is a deterministic one-way parity automaton called the
underlying parity automaton of $\Tt$,
$\R$ is a finite set of registers,
$\outreg\in \R$ is a distinguished register, called the output register,
$\lambda\colon\Delta %
\to\subst{\alpo}{\R}$ is the update function satisfying additionally 
$\lambda(t)(\outreg)\in\outreg\cdot(\R\uplus\alpo)^*$ for all $t\in\Delta$.
  
A configuration of a copyless parity SST $\Tt$ is a tuple $(q,\nu)$ where $q \in Q$ and
$\nu\colon\R\to\alpo^*$ is an assignment.
The initial configuration is $(q_{0},\nu_{0})$ where $\nu_{0}(r)=\varepsilon$ for all $r\in R$.
Since the automaton $\Aa$ is \emph{deterministic}, we simply describe a run on an input 
word $w=a_{0}a_{1}a_{2}\cdots$ as a sequence a sequence of transitions applying the 
corresponding substitutions to the assignments:
$$
(q_{0},\nu_{0})\xrightarrow{a_{0}}(q_{1},\nu_{1})\xrightarrow{a_{1}}(q_{2},\nu_{2})
\xrightarrow{a_{2}}(q_{3},\nu_{3})\cdots
$$ 
where $(q_{0},\nu_{0})$ is the initial configuration and for all $i\geq0$ we have
$t_{i}=(q_{i},a_{i},q_{i+1})\in\Delta$ and 
$\nu_{i+1}=\nu_{i}\circ\lambda(t_{i})$\footnote{An assignment $\nu\colon\R\to\alpo^{*}$ 
is extended to a morphism $\nu\colon(\R\uplus\alpo)^{*}\to\alpo^{*}$ by $\nu(b)=b$ for 
all $b\in\alpo$. Hence, if $\sigma\in\subst{\alpo}{\R}$ is a substitution then 
$\nu'=\nu\circ\sigma$ is an assignment defined by $\nu'(r)=\nu(\sigma(r))$ for all $r\in\R$.
For instance, if $\sigma(r)=br'cbr$ then $\nu'(r)=b\nu(r')cb\nu(r)$.}.
Notice that, from the restriction of the update function, we deduce that 
$\nu_{0}(\outreg),\nu_{1}(\outreg),\nu_{2}(\outreg),\ldots$ is a (weakly) increasing 
sequence of output words in $\alpo^{*}$. If this sequence is unbounded then 
$w\in\dom{\Tt}$ and we let
$\sem{\Tt}(w)=\bigsqcup_{i\geq0}\nu_{i}(\outreg)\in\alpo^\omega$ be the limit (least 
upper-bound) of this sequence. Hence, the semantics of a $\cbsst$ is a partial function
$\sem{\Tt}\colon\alp^{\omega}\to\alpo^{\omega}$ with $\dom{\Tt}\subseteq\dom{\Aa}$.

\section{Main Result}
\label{sec:main}
We are now ready to state our main result, which is an effective procedure to construct a reversible two-way transducer for a deterministic machine. Our result is stated using a conjunction of parity conditions.

The proof relies on constructions that go through $\cbsst$, and are presented in the subsequent sections.

\begin{theorem}\label{thm-main}
Given a deterministic $\dbt$ $T$ with $n$ states, $k$ color conditions and $\ell$ colors, we can construct a $\rbt$ $S$ with $O(\ell^{2kn}(2n)^{4n+1}))$ states, $k$ color conditions and $\ell$ colors such that $\sem{T}=\sem{S}$. 
\end{theorem}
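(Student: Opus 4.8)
The plan is to build $S$ by composing the two constructions of the later sections, so that Theorem~\ref{thm-main} reduces to a size count. First I would convert $T$ into an equivalent $\cbsst$ $\A$ via the construction of Section~\ref{sec:dpt2cpsst}, and then convert $\A$ into an equivalent $\rbt$ $S$ via Section~\ref{sec:cpsst2rpt}. Combining $\sem{\A}=\sem{T}$ with $\sem{S}=\sem{\A}$ yields $\sem{S}=\sem{T}$, so the real content of the statement is tracking how $n$, $k$ and $\ell$ propagate through the two steps.

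For the first step I would run a Shepherdson-style construction that reads the input once, left to right, while storing in its state a crossing profile describing the two-way behaviour of $T$ on the prefix read so far. Recording this profile as a (partial) function over the roughly $2n$ boundary configurations (the $n$ states taken in either reading direction) already costs $(2n)^{O(n)}$ states. The subtle point, flagged in the introduction, is that we cannot compute co-accessible states at a finite position, so the acceptance condition must be carried along the run: I would annotate each profile entry, for every one of the $k$ coloring functions, with the minimal color seen along the corresponding two-way segment, contributing a vector in $\{0,\dots,\ell-1\}^{k}$ per boundary configuration and hence the factor $\ell^{2kn}$. The outputs emitted along those segments are exactly what the $\cbsst$ accumulates in its $O(n)$ registers, the distinguished $\outreg$ register serving as the output tape; this is precisely what circumvents the co-accessibility obstruction that breaks the finite-word proof of \cite{DFJL17}. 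This produces a $\cbsst$ $\A$ with $\ell^{2kn}(2n)^{O(n)}$ states, $O(n)$ registers, and unchanged $k$ and $\ell$.

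For the second step I would invoke the polynomial conversion of Section~\ref{sec:cpsst2rpt}, which factors $\A$ as a composition of a one-way deterministic parity transducer $\Dd$ with an $\rbt$ $\Ff$. Here the entire parity acceptance, and thus the $k$ conditions, the $\ell$ colors and the $\ell^{2kn}$ color bookkeeping, sits in the one-way component $\Dd$, while $\Ff$ is a register-evaluation reversible transducer with no acceptance condition of its own. By Section~\ref{sec:owd2twr}, $\Dd$ can be made reversible with only a linear blow-up in that color part, and by Section~\ref{sec:compose} the composition with $\Ff$ is polynomial and merely takes the conjunction of conditions, so $k$ and $\ell$ are preserved exactly. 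The cost of composition squares the two-way behaviour part, taking $(2n)^{O(n)}$ to $(2n)^{4n+1}$ while leaving $\ell^{2kn}$ to the first power, which yields the announced $O(\ell^{2kn}(2n)^{4n+1})$ states for $S$.

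The main obstacle lies wholly in the first step, in reconciling the two-way parity acceptance of $T$ with the one-way, left-to-right reading of $\A$. One must show that the minimal-color bookkeeping is both sound and complete, i.e. that the minimum color occurring infinitely often along the genuine two-way run of $T$ can be recovered, simultaneously for all $k$ conditions, from the sequence of profile transitions taken by $\A$, and that $\outreg$ converges to exactly $\sem{T}(w)$ with $\dom{\A}=\dom{T}$. By contrast, the second step, though technical, only chains constructions that are already polynomial, and so introduces no new conceptual difficulty beyond the size accounting.
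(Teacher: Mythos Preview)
Your two-step plan is exactly the paper's: apply Theorem~\ref{thm:2DTtoSST} to obtain an equivalent $\cbsst$, then Theorem~\ref{thm-SSTtoR2W} to obtain the $\rbt$, and track the sizes.

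The size accounting, however, contains two errors that happen to cancel. In the first step the color overhead is only $\ell^{kn}$, not $\ell^{2kn}$: in the merging-forest construction only the \emph{leaves} (at most $|Q^{-}|<n$ of them) carry a $\chi$-tuple of colors, so the $\cbsst$ has $O(\ell^{kn}(2n)^{2n})$ states (this is the bound stated in Theorem~\ref{thm:2DTtoSST}). In the second step, the $\odbt\to\rbt$ conversion of Theorem~\ref{thm:odbtEqRdbt} is \emph{quadratic} in the full state count of $\Dd$, with no separation between a ``color part'' and a ``two-way part''; and the composition of Theorem~\ref{thm:composition} is linear (a product), not a squaring. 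So the $\ell^{2kn}$ in the final bound arises from squaring $\ell^{kn}$ when making $\Dd$ reversible, not from already having $\ell^{2kn}$ in the $\cbsst$ and preserving it linearly. Concretely, Theorem~\ref{thm-SSTtoR2W} gives $O(N^{2}m)$ states from a $\cbsst$ with $N$ states and $m$ registers; plugging in $N=O(\ell^{kn}(2n)^{2n})$ and $m=2n$ yields $O(\ell^{2kn}(2n)^{4n+1})$. Your claim that the color part passes through linearly would, had your first-step count been correct, produce the wrong exponent.
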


\begin{proof}
Let $T$ be a  $\dbt$ with $n$ states, $k$ color conditions and $\ell$ colors.
Using Theorem~\ref{thm:2DTtoSST}, we can construct an equivalent $\cbsst$ $T'$ with $O(n(\ell^k)^n(2n+1)^{2n-1})$ states, $2n$ variables, $k$ color conditions and $\ell$ colors. 
Then by Theorem~\ref{thm-SSTtoR2W}, we can construct a $\rbt$ $S$ equivalent to $T'$ whose size is quadratic in the number of states and linear in the number of variables.
More precisely, $S$ has $O((n(\ell^k)^n(2n+1)^{2n-1})^2(2n))=O(\ell^{2kn}(2n)^{4n+1}))$ states, $k$ color conditions and $\ell$ colors, concluding the proof.
\end{proof}

\section{Composition of $\rbt$}\label{sec:compose}
The main reason to use reversible two-way machines is that they are easily composable. Given two composable reversible transducers, we can construct one whose size is linear in both machines, and whose transition function is rather straight-forward. It is explicited in the following theorem and proof.

\begin{theorem}\label{thm:composition}
  Given two $\rbt$ $\Ss$ and $\Tt$, of size $n$ and $m$ respectively, and such that the
  output alphabet of $\Ss$ is the input alphabet of $\Tt$, we can construct a $\rbt$
  $\Uu$, also denoted by $\Tt\circ\Ss$, of size $O(nm)$ such that
  $\sem{\Uu}=\sem{\Tt}\circ\sem{\Ss}$.
\end{theorem}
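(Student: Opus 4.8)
The plan is to compose the two reversible two-way transducers $\Ss$ and $\Tt$ by simulating, in a single two-way pass, the behavior of $\Tt$ running on the output of $\Ss$, while using the two-way head of $\Uu$ over the input of $\Ss$ to regenerate on demand the letters of the intermediate word that $\Tt$ wants to read. The key geometric fact to exploit is that each transition of $\Ss$ produces a \emph{fixed} output word $\lambda_\Ss(t)\in\alpo^*$; thus a position in the intermediate word $\sem{\Ss}(w)$ corresponds to a pair consisting of a transition of $\Ss$ (equivalently, the configuration of $\Ss$ that emitted that block of output) together with an offset inside that finite block. A state of $\Uu$ will therefore be (roughly) a triple: a state $p$ of $\Ss$, a state $q$ of $\Tt$, and a bounded amount of bookkeeping recording where inside the current output block of $\Ss$ the head of $\Tt$ currently sits. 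This gives the $O(nm)$ size bound immediately, since the offset bookkeeping is bounded by the maximal length of an output word of $\Ss$, a constant of the machine.

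First I would set up the simulation of one step of $\Tt$. When $\Tt$, in state $q$, wants to read the letter at its current position $j$ of the intermediate word and move left or right, $\Uu$ must (i) know which letter of $\alpo$ sits at position $j$, and (ii) update the position. Since that letter is determined by the current transition of $\Ss$ and the offset, reading it is local. The delicate point is motion: moving $\Tt$'s head right past the end of the current output block requires $\Uu$ to advance $\Ss$ by one transition (moving $\Ss$'s head possibly forward or backward, since $\Ss$ is two-way) to reach the configuration emitting the \emph{next} nonempty output block, and then reset the offset; symmetrically for leftward moves. Here I would crucially use that $\Ss$ is \emph{co-deterministic}: to step $\Ss$ backward to its unique predecessor configuration is well-defined precisely because no two states enter a given state on a given letter. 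Determinism of $\Ss$ handles the forward direction. So reversibility of $\Ss$ is exactly what lets $\Uu$ navigate the intermediate tape in both directions deterministically and co-deterministically.

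Next I would define the transitions, output, and coloring of $\Uu$, and verify the three properties: correctness, reversibility, and the parity/acceptance condition. For the output function, $\Uu$ emits $\lambda_\Tt$ of the simulated $\Tt$-transition, and produces nothing while merely repositioning $\Ss$'s head to fetch the next intermediate letter. For the coloring, I would take the conjunction of the color conditions of $\Ss$ and of $\Tt$: since $\Uu$ uses a conjunction of parity conditions (as the preliminaries allow), a run of $\Uu$ is accepting iff the projection onto $\Ss$'s moves is accepting for $\Ss$ and the projection onto $\Tt$'s moves is accepting for $\Tt$. One must check that an infinite accepting run of $\Uu$ corresponds to $\Ss$ reading all of $w$ and producing an infinite intermediate word that $\Tt$ reads entirely and accepts, and conversely; this is where the definition of ``reads the whole word'' via $\sup\{|u_n|\}=\infty$ must be tracked through the simulation. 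Finally, reversibility of $\Uu$ follows from reversibility of both $\Ss$ and $\Tt$ together with the determinism of the offset bookkeeping, but this verification is the main obstacle: one has to rule out the degenerate cases where $\Ss$ emits empty output on some transitions, so that advancing to the next \emph{nonempty} block may take unboundedly many $\Ss$-steps. I expect the hard part to be showing that this ``skip over empty blocks'' procedure is itself reversible and does not break the bounded-state guarantee — handling $\epsilon$-outputs of $\Ss$ cleanly, possibly by a normalization ensuring each block is entered and exited deterministically from both sides, is where the real care lies.
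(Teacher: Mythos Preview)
Your high-level strategy---simulate $\Tt$ on the intermediate word while using reversibility of $\Ss$ to move the input head backward when $\Tt$ wants to move left---is exactly the paper's idea, as is the use of a conjunction of the parity conditions of $\Ss$ and $\Tt$. The one substantive difference is your decision to keep an \emph{offset} inside the current output block. The paper dispenses with this entirely: a single transition of $\Uu$ reads one input letter $a$, considers the \emph{whole} block $v=\lambda_\Ss(q,a)$ (or $\lambda_\Ss(q',a)$ for the unique $\delta_\Ss$-predecessor $q'$ of $q$ when $\Tt$ is moving left), and runs $\Tt$ to completion on $v$ via an extended transition $\alpha^*(p,v)$. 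The state set is therefore exactly $Q_\Ss\times Q_\Tt$, with the forward/backward partition $R^{+}=Q_\Ss^{+}{\times}Q_\Tt^{+}\cup Q_\Ss^{-}{\times}Q_\Tt^{-}$ and $R^{-}=Q_\Ss^{-}{\times}Q_\Tt^{+}\cup Q_\Ss^{+}{\times}Q_\Tt^{-}$.

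This buys two things over your version. First, the state count is literally $nm$, with no hidden factor depending on the maximal output length of $\Ss$; calling that length ``a constant of the machine'' is exactly the kind of size-accounting one usually avoids in these statements. Second, and more importantly, the $\varepsilon$-output difficulty you flag simply evaporates: when $v=\varepsilon$ one has $\alpha^*(p,\varepsilon)=p$, so $\Tt$'s state is unchanged and $\Uu$ just advances (or rewinds) $\Ss$ by one step. There is no ``skip to the next nonempty block'' procedure to make reversible---each step of $\Uu$ is one step of $\Ss$ regardless of whether that step produced output. Your approach is not wrong, but the anticipated hard part is an artifact of the offset bookkeeping, and removing it both simplifies the reversibility argument and tightens the bound.
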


\begin{proof}[Sketch of proof.]
  The set of states of the machine $\Uu$ is the cartesian product of the sets of states of
  $\Ss$ and $\Tt$.  Given an input word $u$ of $\Ss$, $\Uu$ simulates $\Ss$ until some
  transition produces a nonempty output word $v\in\alpo^{+}$.  Then, it stops the
  simulation of $\Ss$ to simulate the run $\rho_\Tt$ of $\Tt$ over $v$.  If $\rho_\Tt$
  exits $v$ on the right, then $\Uu$ resumes the simulation of $\Ss$ up to the next
  transition producing a nonempty word.  Otherwise, it rewinds the run of $\Ss$ to get its
  previous production, and simulates $\Tt$ on it, starting from the right.

  The conjunction of parity conditions allows to an easy construction for intersection,
  which is similar to what is expected here.  A word $u$ should be accepted if $u$ belongs
  to the domain of $\Ss$ and $\sem{\Ss}(u)$ belongs to the domain of $\Tt$.  By doing the
  conjunction of both acceptance, we are able to recognize the domain of $\Uu=\Tt\circ\Ss$.
\end{proof}

\begin{proof}
  Let $\Ss=(Q, \alp, \delta, q_0, \chi, \alpo, \lambda)$ and 
  $\Tt=(P, \alpo, \alpha, p_0, \chi', C, \beta)$.
  We define the composition $\Uu=\Tt\circ\Ss=(R,\alp,\mu,r_0,\chi'',C,\nu)$ where 
  $R=Q\times P$ is splitted as
  $$
  R^+=Q^+\times P^+ \cup Q^-\times P^- \qquad\qquad\qquad
  R^-=Q^-\times P^+ \cup Q^+\times P^- \,.
  $$
  The initial state is $r_0=(q_0,p_0)$ and $\mu$, $\nu$ and $\chi''$ are defined below.
  
  To properly define $\mu$ and $\nu$, we extend $\alpha$ and $\beta$ to finite words, and
  more precisely to the productions of $\Ss$.  Given a word $v=\lambda(q,a)\in\alpo^{*}$
  for some $(q,a)\in Q\times \alp$, and a state $p$ of $\Tt$, we define $\rho_p(v)$ to be
  the maximal run of $\Tt$ over $v$ starting in state $p$ on the left (resp.\ right) of
  $v$ if $p\in P^+$ (resp.\ $p\in P^-$).
  Then we define $\alpha^*(p,v)$ as the state reached by
  $\rho_{p}(v)$ when exiting $v$.  It is undefined if $\rho_{p}(v)$ loops within $v$.
  Note that if $\alpha^*(p,v)$ belongs to $P^+$ (resp.\ $P^-$), then $\Tt$ exits $v$ on
  the right (resp.\ on the left).  We define $\beta^*(p,v)$ as the concatenation of the
  productions of $\rho_p(v)$.  If $\alpha^*(p,v)$ is defined, then $\beta^*(p,v)$ is
  finite.  Note that $\rho_{p}(\varepsilon)$ is an empty run, so we have
  $\alpha^*(p,\epsilon)=p$ and $\beta^*(p,\epsilon)=\epsilon$.

  For the parity conditions, we let $\chi''=\{\overline{\cc}\mid\cc\in\chi\cup\chi'\}$ 
  and we extend the functions $\cc\in\chi'$ to finite runs $\rho_{p}(v)$. More precisely, 
  we let $\cc^{*}(p,v)$ be the minimum $\cc$-value taken by the transitions of 
  $\rho_{p}(v)$. When $v=\varepsilon$ then $\rho_{p}(v)$ is an empty run and we 
  set $\cc^{*}(p,v)$ to the largest odd value in all values taken by $\cc$ on transitions of 
  $\Tt$.
  Then, given a state $(q,p)$ of $\Uu$,
  \begin{itemize}
    \item If $p\in P^+$ then we let $v=\lambda(q,a)$.  We set $\nu((q,p),a)=\beta^*(p,v)$
    and, %
    with $q'=\delta(q,a)$ and $p'=\alpha^*(p,v)$ we define
    $$
    \mu((q,p),a)=
    \begin{cases}
      (q',p') & \text{if } p'\in P^+, \\
      (q ,p') & \text{if } p'\in P^-
    \end{cases}
    \qquad\text{and}\qquad
    \overline{\cc}((q,p),a)=
    \begin{cases}
      \cc(q,a) & \text{if } \cc\in\chi, \\
      \cc^{*}(p,v) & \text{if } \cc\in\chi'.
    \end{cases}
    $$

    \item If $p\in P^-$ then we let $q'$ be such that $q=\delta(q',a)$ and
    $v=\lambda(q',a)$.  Note that $q'$ is unique by co-determinism of $\Ss$.  We set
    $\nu((q,p),a)=\beta^*(p,v)$ and, with $p'=\alpha^*(p,v)$ we define
    $$
    \mu((q,p),a)=
    \begin{cases}
      (q ,p') & \text{if } p'\in P^+, \\
      (q',p') & \text{if } p'\in P^-. 
    \end{cases}
    \qquad\text{and}\qquad
    \overline{\cc}((q,p),a)=
    \begin{cases}
      \cc(q',a) & \text{if } \cc\in\chi, \\
      \cc^{*}(p,v) & \text{if } \cc\in\chi'.
    \end{cases}
    $$    

  \end{itemize}
  The intuition behind the transition function is that $\Uu$ simulates $\Ss$ to feed a
  simulation of $\Tt$.  If $\Tt$ moves forward on its input, then $\Uu$ simulates $\Ss$ forward.
  If $\Tt$ moves backward on its input, then $\Uu$ backtracks the computation of $\Ss$.
  During a switch of direction, $\Ss$ stays put.
  
  The acceptance condition of conjunctive parity was chosen specifically to allow for
  smooth composition.  By using both sets of parities, the transducer $\Uu$ ensures that
  the input word is accepted by $\Ss$, and that its production is accepted by $\Tt$.  It
  is worth noting that since $\Uu$ can rewind the run of $\Ss$, it can take a transition
  (and consequently its colors) multiple times on a given input position.
  This increases the multiplicity of the transitions taken during a non-looping run by a
  constant factor since a deterministic transducer never visits twice a given position in
  the same state.  Hence, the set of colors of $\Ss$ that $\Uu$ sees infinitely often on
  a non-looping run is the same as the ones seen by $\Ss$.
\end{proof}

\section{\( \odbt \) to \( \rbt \)}\label{sec:owd2twr}

Similarly to the finite words, given a deterministic one-way machine, one can construct a reversible one realizing the same function.

\begin{theorem}\label{thm:odbtEqRdbt}
	Let \( \Tt \) be a \( \odbt \) with \( n \) states, we can construct a \( \rbt \) \( \Tt' \)
	of size \( O(n^2) \)
	such that \( \sem{\Tt} = \sem{\Tt'} \).
\end{theorem}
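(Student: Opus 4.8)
The plan is to mirror the finite-word construction that turns a one-way deterministic machine into a reversible two-way one, and to adapt it to the parity/infinite-word setting. Since $\Tt$ is one-way and deterministic, its run on any input $w$ is the unique forward ray $q_0, q_1, q_2, \dots$ with $q_{i+1}=\delta(q_i,a_i)$; hence forward determinism is inherited for free, and the only structural property left to engineer is \emph{co-determinism} of the transition relation, while preserving both the produced output and the conjunctive parity acceptance.

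First I would isolate the obstruction to co-determinism: it is exactly the \emph{merges} of $\Tt$, i.e.\ pairs of transitions $(p_1,a,q')$ and $(p_2,a,q')$ entering the same state $q'$ on the same letter. Keeping all of these as direct right-moving transitions is incompatible with co-determinism, so at most one incoming transition per pair $(a,q')$ may be kept ``directly''. The core of the construction is a two-way gadget that re-routes the forbidden incoming transitions through short \emph{backward excursions}, so that the overall relation becomes co-deterministic without changing the realized function. These excursions use auxiliary states carrying a pair (current state, probed state) in $Q\times Q$, which is what yields the $O(n^2)$ bound. A subtlety I would check carefully is the behaviour at the left delimiter $\leftend$, where backward states must turn around.

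Then I would define the output and coloring functions so that the simulation is faithful: each productive forward transition of $\Tt'$ emits exactly the output $\lambda$ of the corresponding transition of $\Tt$ and carries its colors, whereas the auxiliary transitions introduced by the detours emit $\epsilon$ and are assigned colors that cannot interfere (e.g.\ large even values). The key point for the $\omega$-setting is that each detour should be \emph{finite and local}, so the run of $\Tt'$ still reads the whole word, produces the same infinite output $\gamma_1\gamma_2\cdots$, and, crucially, the detour transitions contribute nothing new to $\infi(\rho)$. This is what guarantees that for every coloring function the minimal color seen infinitely often has the same parity along the run of $\Tt'$ as along the run of $\Tt$, so that $\dom{\Tt'}=\dom{\Tt}$ and $\sem{\Tt'}=\sem{\Tt}$.

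The main obstacle is the simultaneous verification that the detour gadget (i) is genuinely co-deterministic, i.e.\ every configuration has a unique predecessor, including at the points where detours splice back into the forward ray; (ii) always terminates, so the run never gets trapped in an infinite local excursion; and (iii) keeps excursions bounded, so that no position is revisited infinitely often and no spurious transition enters $\infi(\rho)$ to corrupt the acceptance condition. Once these three properties are established, the $O(n^2)$ state count follows directly from the $Q\times Q$ auxiliary states, and the equality of semantics follows by matching the productive transitions of $\Tt'$ one-to-one with those of $\Tt$.
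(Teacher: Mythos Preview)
Your high-level plan is aligned with the paper's: both resolve the failure of co-determinism by inserting backward excursions over a state space of size $O(n^2)$, produce output only on ``productive'' transitions that match the original run one-to-one, and assign neutral colours to the auxiliary transitions. But the proposal stops short of an actual construction, and two of the properties you rely on are not the ones that make the construction go through.

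First, the detours are not ``short'' or ``local'', and locality is not what yields reversibility. In the construction that works---the tree-outline construction of \cite{DFJL17}, adapted here to infinite inputs---whenever several runs of $\Tt$ merge into the accepting run, the reversible machine must walk each merging branch \emph{all the way back to its origin} before proceeding; such branches may be as long as the current prefix. Concretely, the paper simulates two synchronised heads that follow the outline of the run-tree from above and from below; the state space is (essentially) $(\underline{Q}\cup\overline{Q})^{2}$, and the productive configurations are exactly those of the diagonal form $(\underline{q},\overline{q})$. Reversibility is established by an explicit case analysis of $\delta'$ and of $\delta'^{-1}$, not by any bound on the length of excursions. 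Your proposal names no gadget, and ``(current state, probed state)'' does not yet pin one down; the missing idea is precisely this two-head outline traversal.

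Second, your parity argument has a real gap. From ``each position is visited boundedly often'' you infer that ``no spurious transition enters $\infi(\rho)$''. This does not follow: a transition of $\Tt'$ is a pair (state, letter), not tied to a position, and auxiliary transitions \emph{do} in general occur infinitely often (any merge pattern that repeats along the input produces the same auxiliary transition each time). The correct argument, which the paper gives, is independent of whether auxiliary transitions lie in $\infi(\rho')$: the diagonal states $(\underline{q},\overline{q})$ are visited infinitely often and carry the original colours $c(q,a)$, while every non-diagonal transition is assigned the global maximum colour; hence for each $c\in\chi$ the minimum colour seen infinitely often along $\rho'$ equals that along $\rho$. Your ``large even value'' choice would also work, but for this reason, not because auxiliary transitions are absent from $\infi(\rho')$.
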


\begin{figure}[tbp]
	\centering
	\begin{subfigure}{0.48\textwidth}
		\def\xspacing {1cm}
		\def\yspacing {0.65cm}
		\def\smallspacing {0.2cm}
		\def\verysmallspacing {0.1cm}
		\newcommand{\abvcoord}[1]{($(#1)-(0,\smallspacing)$)}
		\newcommand{\littleabvcoord}[1]{($(#1)-(0,\verysmallspacing)$)}
		\newcommand{\blwcoord}[1]{($(#1)-(0,-\smallspacing)$)}
		\newcommand{\leftcoord}[1]{($(#1)+(-\smallspacing,0)$)}
		\centering
		\begin{tikzpicture}
			\coordinate (p0) at (0, 0);
			
			\foreach \index in {1, 2, 3, 4}{
				\def\indexprev{\the\numexpr\index-1\relax}
				\coordinate (p\index) at ($(p\indexprev)+(\xspacing, 0)$);
			}
			
			\node (l0) at ($(p0)-(\xspacing/2, 0)$) {\( \leftend \)};
			
			\foreach \index/\letter in {1/b, 2/a, 3/a, 4/b}{
				\node (l\index) at ($(p\index)-(\xspacing/2, 0)$) {\( \letter \)};
			}
			
			\node (ldots) at ($(p4)+(\xspacing/2, 0)$) {\( \cdots \)};
			
			\coordinate (baserun) at (-\xspacing, -2*\yspacing);
			
			\foreach \index/\stateid in {1/1, 2/2, 3/3, 4/3, 5/3}{
				\node[scale=0.7] (br\index) at ($(baserun)+(\index*\xspacing,0)$) {\( \stateid \)};
			}
			
			\node[coordinate] (brlast) at ($(br5)+(\xspacing/2, 0)$) {};
			
			\foreach \index in {2, 3, 4, 5}{
				\def\indexprev{\the\numexpr\index-1\relax}
				\draw (br\indexprev) -- (br\index);
			}
			
			\draw[dotted] (br5) -- (brlast);
			
			\coordinate (belowrun) at ($(baserun)+(\xspacing, \yspacing)$);

			\foreach \index/\stateid in {0/3, 1/3}{
				\node[scale=0.7] (bel\index) at ($(belowrun)+(\index*\xspacing,0)$) {\( \stateid \)};
			}
			
			\draw (bel0) -- (bel1);
			\draw (bel1) -- (br3);

			\node[scale=0.7] (above1) at ($(p1)-(0,\yspacing*3)$) {1};
			\node[scale=0.7] (above22) at ($(p2)-(0,\yspacing*3)$) {2};
			\node[scale=0.7] (above21) at ($(p2)-(0,\yspacing*4)$) {1};
			
			\foreach \first/\second in {above1/br3, above21/br4, above22/br4}{
				\draw (\first) -- (\second);
			}
			
			\draw[rounded corners, color=orange] \blwcoord{br1} -- \blwcoord{br2} -- \abvcoord{bel1} -- \abvcoord{bel0} .. controls \leftcoord{bel0} .. \blwcoord{bel0} -- \blwcoord{bel1} -- \blwcoord{br3} -- \blwcoord{br4} -- \blwcoord{br5};
			
			\draw[dotted, color=orange] \blwcoord{br5} -- \blwcoord{brlast};
			
			\node[coordinate] (abvbr1) at \abvcoord{br1} {};
			\node[coordinate] (abvbr2) at \abvcoord{br2} {};
			\node[coordinate] (abvabvbr1) at \littleabvcoord{abvbr1} {};
			\node[coordinate] (abvabvbr2) at \littleabvcoord{abvbr2} {};
			\node[coordinate] (topbr1) at \littleabvcoord{abvabvbr1} {};
			\node[coordinate] (topbr2) at \littleabvcoord{abvabvbr2} {};
			
			\node[coordinate] (controlbr21) at ({$(abvbr2)!0.2!(abvabvbr2)$} -| {$(abvbr2)+(\verysmallspacing, 0)$}) {};
			\node[coordinate] (controlbr22) at ({$(abvbr2)!0.8!(abvabvbr2)$} -| {$(abvbr2)+(\verysmallspacing, 0)$}) {};
			
			\node[coordinate] (controlbr11) at ({$(abvabvbr1)!0.2!(topbr1)$} -| {$(abvabvbr1)+(-\verysmallspacing, 0)$}) {};
			\node[coordinate] (controlbr12) at ({$(abvabvbr1)!0.8!(topbr1)$} -| {$(abvabvbr1)+(-\verysmallspacing, 0)$}) {};
			
			\node[coordinate] (controltop2) at ({$(topbr2)!0.5!\blwcoord{above1}$} -| {$(topbr2)+(\verysmallspacing/2, 0)$}) {};
			
			\draw[rounded corners, color=cyan] (abvbr1) [sharp corners] -- (abvbr2) .. controls (controlbr21) and (controlbr22) .. (abvabvbr2) -- (abvabvbr1) .. controls (controlbr11) and (controlbr12) .. (topbr1) -- (topbr2) .. controls (controltop2) .. \blwcoord{above1} [rounded corners]  -- \leftcoord{above1} -- \abvcoord{above1} -- \abvcoord{br3} -- \blwcoord{above22} -- \leftcoord{above22} -- \abvcoord{above22} -- \blwcoord{above21} -- \leftcoord{above21} -- \abvcoord{above21} -- \abvcoord{br4} -- \abvcoord{br5};
			
			\draw[dotted, color=cyan] \abvcoord{br5} -- \abvcoord{brlast};
		\end{tikzpicture}
		\caption{A run tree of automaton \( \Aa \)}
		\label{fig:ewd2twr:runtree}
	\end{subfigure}
	\hfill
	\begin{subfigure}{0.48\textwidth}
		\centering
		\gusepicture{automaton}
		\caption{Automaton \( \Aa \)}
		\label{fig:ewd2twr:a}
	\end{subfigure}
	\caption{The automaton \( \Aa \) depicted in Figure~\ref{fig:ewd2twr:a}
		recognizes all infinite words over alphabet \( \left\{ a, b \right\} \)
		having an \( a \) in first or second position
		(it has only one coloring function,
		represented after the colons in the transitions).
		Figure~\ref{fig:ewd2twr:runtree} is the part of the tree
		corresponding to the run of \( \Aa \) on the prefix \( baab \)
		of an accepted word.
		Each node of the tree is a configuration of \( \Aa \),
		represented here by a control state.
		Its horizontal position allows to deduce the position
		in the input word, depicted above the tree.
		The horizontal straight path represents the accepting run.
		Notice that when the top reading head needs to go backward
		to go around a branch,
		the bottom ones follows and goes backward on the accepting run.}
	\label{fig:ewd2twr}
\end{figure}

\begin{proof}
	The construction is reminiscent of the tree-outline construction for co-deterministic
	transducers of \cite{DFJL17}.  The difference is that here, we begin with a
	deterministic transducer with an infinite input word, so instead of starting from the
	root of the tree, which is at the end of a finite input word, our outline has to start
	from a leaf at the beginning of the input word, corresponding to the initial
	configuration.  We also generalize the construction by allowing any degree of non
	(co-)determinism: while in \cite{DFJL17}, at most two branches could merge on any
	vertex, here we allow any number.
	
	We begin with the underlying automaton:
	from a one-way deterministic parity automaton (\( \odba \)) \( \Aa \),
	we build a two-way reversible parity automaton (\( \rba \)) \( \Aa' \)
	simulating the behavior of \( \Aa \).
	For any accepted input word \( w \),
	we consider the infinite acyclic graph (simply called a tree) representing all the 
	partial runs of \( \Aa \) merging with the accepting run of $\Aa$ on \( w \)
	(note that because \( \Aa \) is deterministic,
	there is only one accepting run for a given word).
	Automaton \( \Aa' \) will simulate two synchronized reading heads
	going along the outline of this tree,
	as illustrated in Figure~\ref{fig:ewd2twr}.
	
	The two heads are required to make \( \Aa' \) equivalent
	to \( \Aa \): we need to be able to discriminate configurations of
	a run of \( \Aa' \) occurring in the accepting run of \( \Aa \)
	from the ones added to account for the non-initial runs.
	One reading head follows the outline of the run tree from above,
	and the other one from below.
	The configurations where the two reading heads point to the same
	state of \( \Aa \) correspond to those occurring in the accepting run of
	this automaton.
	
	The reading heads are placed above and below the initial state,
	and they move together to the right,
	until one of them encounters a branching in the tree.
	When this happens, 
	the $\rba$ moves backwards to go around the branch.
	When the branch dies
	(which necessarily happens because from each position, the prefix of a word is finite),
	the exploration continues to the right.
	As the two heads are synchronized, and because branches may not all be of the same length,
	when one head needs going left the other one may impose right moves
	in order to reach another branch,
	on which it will be able to go left far enough to follow the first head.

	A run of \( \Aa' \) can be seen as a straightforward journey
	along the flattened outline of the tree,
	hence the reversibility.

	Once \( \Aa' \) is defined, we define \( \Tt' \):
	it is the \( \rbt \) having \( \Aa' \) as underlying automaton,
	and whose output function is that of \( \Tt \) from states where the two reading heads
	point to the same state, and \( \varepsilon \) otherwise.

	\noindent\textbf{Formal construction of \( \Aa' \).}
	Let \( \Tt= \left( Q, \alp, \delta, q_0, \chi, \alpo, \lambda \right) \) be a $\odbt$.
	
	Fix an arbitrary total order $\preceq$ over \( Q \).
	For $q\in Q$ and $a\in\alp$ we let $\delta_{a}^{-1}(q)=\{p\in Q\mid\delta(p,a)=q\}$, 
	and we also set $\delta_{\leftend}^{-1}(q)=\emptyset$,
	except if \( q \neq q_0 \) (it is undefined otherwise).
	Let \( \predabv{a}{q}{q'} \) be a predicate,
	true if \( q' \) is minimal with respect to \( \prec \) such that
	\( q \prec q' \) and $\delta(q,a)=\delta(q',a)$.
	Let $\overline{Q}=\{\overline{q}\mid q\in Q\}$ and
	$\underline{Q}=\{\underline{q}\mid q\in Q\}$
	be two copies of $Q$.
	Define \( \Aa' = \left( Q', \alp, \delta, q_0', \chi' \right) \) by
	\begin{itemize}[nosep]
		\item \( Q' = Q'^{+} \uplus Q'^{-} =
		((\underline{Q}\cap\overline{Q})\times(\underline{Q}\cap\overline{Q})) \setminus
		\{ (\underline{q},\underline{q}),(\overline{q},\overline{q}) \mid q\in Q \}\)
		with
		\( q_0'=(\underline{q_0},\overline{q_0}) \) and
		\begin{itemize}[nosep]
			\item \( Q'^{+} = \underline{Q} \times \overline{Q} \cup
			\overline{Q} \times \underline{Q} \), 
			\item \( Q'^{-} =  \left( \underline{Q} \times \underline{Q} \cup
			\overline{Q} \times \overline{Q} \right) \setminus
			\left\{ \left( \overline{q}, \overline{q} \right),
			\left( \underline{q}, \underline{q} \right) \mid
			q \in Q \right\} \).
		\end{itemize}
		In a state $(r,s)\in Q'$, the first (resp.\ second) component
		is for the head which is
		\textquote{above} (orange line) (resp.\ \textquote{below} (blue line))
		the accepting run (black straight line) in Figure~\ref{fig:ewd2twr:runtree}.
		In both cases, a state $\underline{q}\in\underline{Q}$
		(resp.\ $\overline{q}\in\overline{Q}$) means that the
		corresponding head (colored line) is above (resp.\ below) the state.
		
		\item Transitions: first two cases for $Q'^{+}$ states and then for
		$Q'^{-}$ states
		\begin{enumerate}[nosep]
			\item  \label{item:1}
			$\delta'((\underline{p},\overline{q}),a)=
			\begin{cases}
				(\overline{p'},\overline{q}) & \text{if $\predabv{a}{p}{p'}$ for some $p'\in Q$} \\
				(\underline{p},\underline{q'}) & \text{elseif $\predabv{a}{q'}{q}$ for some $q'\in Q$} \\
				(\underline{\delta(p,a)},\overline{\delta(q,a)}) & \text{otherwise.}
			\end{cases}$
			
			\item  \label{item:2}
			$\delta'((\overline{p},\underline{q}),a)=
			\begin{cases}
				(\underline{p'},\underline{q}) & \text{if $\predabv{a}{p'}{p}$ for some $p'\in Q$} \\
				(\overline{p},\overline{q'}) & \text{elseif $\predabv{a}{q}{q'}$ for some $q'\in Q$} \\
				(\overline{\delta(p,a)},\underline{\delta(q,a)}) & \text{otherwise.}
			\end{cases}$
			
			\item  \label{item:3}
			$\delta'((\overline{p},\overline{q}),a)=
			\begin{cases}
				(\underline{p},\overline{q}) & \text{if $\delta_{a}^{-1}(p)=\emptyset$} \\
				(\overline{p},\underline{q}) & \text{elseif $\delta_{a}^{-1}(q)=\emptyset$} \\
				(\overline{\min\delta_{a}^{-1}(p)},\overline{\min\delta_{a}^{-1}(q)}) & \text{otherwise.}
			\end{cases}$
			
			\item  \label{item:4}
			$\delta'((\underline{p},\underline{q}),a)=
			\begin{cases}
				(\overline{p},\underline{q}) & \text{if $\delta_{a}^{-1}(p)=\emptyset$} \\
				(\underline{p},\overline{q}) & \text{elseif $\delta_{a}^{-1}(q)=\emptyset$} \\
				(\underline{\max\delta_{a}^{-1}(p)},\underline{\max\delta_{a}^{-1}(q)}) & \text{otherwise.}
			\end{cases}$
		\end{enumerate}
		\item $\chi'=\{\cc'\mid\cc\in\chi\}$ with $c'((r,s),a) = 
		\begin{cases}
			c(q,a) &\text{ if } (r,s) = (\underline{q},\overline{q}) \\
			\max \{c(p,a) \mid p\in Q, a\in\alp \} &\text{otherwise.}
		\end{cases}$
	\end{itemize}
	
	\noindent\textbf{Reversibility of \( \Aa' \).}
	From the definition of $\delta'$, $\Aa'$ is clearly deterministic.
	
	We show that \( \Aa' \) is also codeterministic.
	There are three potential transitions leading to a given state in $Q'$ by reading
	a given letter, only one of which can be part of \( \delta' \).
	
	The following case analysis shows this:
	
	\[ \delta_{a}'^{-1}((\underline{p},\overline{q})) =
	\begin{cases}
		\left( \overline{p}, \overline{q} \right) & \text{if } \delta_{a}^{-1}(p) = \emptyset \\
		\left( \underline{p}, \underline{q} \right) & \text{elseif } \delta_{a}^{-1}(q) =
		\emptyset \\
		\left( \underline{\max\delta^{-1}_a(p)}, \overline{\min\delta^{-1}_a(q)} \right) & \text{otherwise} \\
	\end{cases} \]
	
	\[ \delta_{a}'^{-1}((\overline{p}, \underline{q})) =
	\begin{cases}
		\left( \underline{p}, \underline{q} \right) & \text{if } \delta_{a}^{-1}(p) =
		\emptyset \text{ (so \( p \neq q_0 \))} \\
		\left( \overline{p}, \overline{q} \right) & \text{elseif } \delta_{a}^{-1}(q) = \emptyset \\
		\left( \overline{\min\delta^{-1}_a(p')}, \underline{\max\delta^{-1}_a(q')} \right) & \text{otherwise} \\
	\end{cases} \]
	
	\[  \delta_{a}'^{-1}((\overline{p'},\overline{q'}))=
	\begin{cases}
		(\underline{p},\overline{q'}) & \text{if $\predabv{a}{p}{p'}$ for some $p\in Q$} \\
		(\overline{p'},\underline{q}) & \text{elseif $\predabv{a}{q}{q'}$ for some $q\in Q$} \\
		(\overline{\delta(p', a)},\overline{\delta(q', a)}) & \text{otherwise.}
	\end{cases} \]
	
	\[ \delta_{a}'^{-1}((\underline{p'},\underline{q'})) =
	\begin{cases}
		\left( \overline{p}, \underline{q'} \right) & \text{if } \predabv{a}{p'}{p}
		\text{ for some } p \in Q \\
		\left( \underline{p'}, \overline{q} \right) & \text{elseif } \predabv{a}{q'}{q} \\
		\left( \underline{\delta(p', a)}, \underline{\delta(q', a)} \right) & \text{otherwise.}
	\end{cases} \]
	
	We conclude that $\Aa'$ is codeterministic, and therefore reversible.
	
	Intuitively, automaton \( \Aa' \) will follow the run of \( \Aa \),
	adding extra steps to deal with the states that are co-reachable
	from states of this run.
	States of \( \Aa' \) of the form \( \left( \underline{q}, \overline{q} \right) \)
	correspond to states \( q \) in the run of \( \Aa \).
	The key idea behind the construction of \( \Aa' \)
	is that in a run of this automaton,
	configurations of the form \( \sconf{\left( \underline{q}, \overline{q} \right)} \)
	will occur in the same order as the configurations \( \sconf{q} \)
	in the run of \( \Aa \).
	This is the point of the following claim.
	
	\newcommand{\infword}{w}
	\newcommand{\preword}{w'}
	\newcommand{\runaprime}{\rho'}
	
	\begin{claim}\label{claim:1dt2R}
		Let \( \rho = \aconf{q_0}{u_0}{v_0} \xrightarrow{} \aconf{q_1}{u_1}{v_1}
		\xrightarrow{} \aconf{q_2}{u_2}{v_2} \xrightarrow{} \cdots \)
		be an accepting run of \( \Aa \) on \( \infword\in\alp^{\omega} \)
		(we have $w=u_{i}v_{i}$ for all $i\geq0$ where $u_{i}$
		is the prefix of length $i$ of $w$).
		There is an accepting run \( \runaprime \) of \( \Aa' \) on
		\( \infword \) such that the projection of \( \runaprime \)
		on the configurations with states of the form
		\( \left( \underline{p}, \overline{p} \right) \) is
		\( \aninitconf{( \underline{q_0}, \overline{q_0} )}{\infword} \xrightarrow{+}
		\aconf{( \underline{q_1}, \overline{q_1} )}{u_1}{v_1} \xrightarrow{+}
		\aconf{( \underline{q_2}, \overline{q_2} )}{u_2}{v_2} \xrightarrow{+} \cdots .\)
	\end{claim}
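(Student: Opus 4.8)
The plan is to use that $\Aa'$ is deterministic (just established), so that on input $w$ there is a unique run $\rho'$ from $\aninitconf{(\underline{q_0},\overline{q_0})}{w}$; it then suffices to show that this run passes through the diagonal configurations $\aconf{(\underline{q_i},\overline{q_i})}{u_i}{v_i}$ in the order $i=0,1,2,\dots$, that it visits no other diagonal configuration, and that it is accepting. I would prove the first two points together by induction on $i$, reducing everything to one \emph{local} statement: starting from $\aconf{(\underline{q_i},\overline{q_i})}{u_i}{v_i}$, the run reaches $\aconf{(\underline{q_{i+1}},\overline{q_{i+1}})}{u_{i+1}}{v_{i+1}}$ after finitely many steps, none of which is a configuration of the form $(\underline{p},\overline{p})$. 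Concatenating these finite segments yields $\rho'$.

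For the local step, I would fix $i$, write $a=a_i$ and $q_{i+1}=\delta(q_i,a)$, and consider the finite tree rooted at $q_{i+1}$ consisting of all states that reach $q_{i+1}$ along $w$, i.e.\ the pairs $(p,j)$ with $j\le i$ such that reading $a_j\cdots a_i$ from $p$ ends in $q_{i+1}$. It is finite because the prefix $u_{i+1}$ is finite and each level is a subset of $Q$, and $q_i$ is its node lying on the spine. I would then show that the segment of $\rho'$ between the two diagonal configurations is exactly an outline walk of this tree by the two heads: the first component explores the branches rooted at the siblings of $q_i$ that are $\succ q_i$, the second those rooted at the siblings $\prec q_i$, the two meeting again only at the root $q_{i+1}$. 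Each case of the transition function performs a single outline step: the forward-state rules use the immediate-sibling predicate $\predabv{a}{\cdot}{\cdot}$ to move between branches merging at a node, while the backward-state rules use $\delta_a^{-1}(\cdot)$ to descend into a branch or, when $\delta_a^{-1}(\cdot)=\emptyset$, to turn a head around, the special status of $\delta_\leftend^{-1}(q_0)$ handling the left delimiter $\leftend$. The backward formula for $\delta'^{-1}$ computed above, which sends $(\underline{q_{i+1}},\overline{q_{i+1}})$ back to $(\underline{\max\delta_a^{-1}(q_{i+1})},\overline{\min\delta_a^{-1}(q_{i+1})})$, confirms that the walk closes up correctly on the spine; finiteness of the tree gives termination of the excursion; and the fact that the two heads stay on opposite sides of the spine during the excursion shows that no intermediate configuration is diagonal.

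It remains to check acceptance. Since $\rho'$ reaches position $i$ at each diagonal configuration $(\underline{q_i},\overline{q_i})$, we have $\sup_n|u_n|=\infty$, so $\rho'$ reads all of $w$. For the parity conditions, the only transitions carrying a genuine color are those leaving a diagonal configuration $(\underline{q_i},\overline{q_i})$, which carry $c(q_i,a_i)$ and are in bijection with the spine steps of $\rho$, while every excursion transition carries the global maximum $K=\max\{c(p,a)\mid p\in Q,\ a\in\alp\}$. Hence the $c'$-colors seen infinitely often by $\rho'$ are exactly $\{c(t)\mid t\in\infi(\rho)\}$, possibly together with $K$; as $K$ dominates all colors, $\min(c'(\infi(\rho')))=\min(c(\infi(\rho)))$, which is even for every $c\in\chi$ since $\rho$ is accepting. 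Thus $\rho'$ is accepting, and projecting away the non-diagonal excursion configurations leaves exactly the announced sequence.

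I expect the local outline lemma to be the main obstacle. One must fix a precise bijection between the outline of the finite tree and the intermediate configurations of $\rho'$, and verify that each case of the transition function advances it by exactly one edge, with the correct behaviour at the delimiter $\leftend$. The bookkeeping is delicate because both heads share a single input position: a head forced to walk back around a long branch drags the other over positions where its own branch has already ended, and one must check that this synchronization never creates a forbidden state $(\underline{q},\underline{q})$ or $(\overline{q},\overline{q})$, nor a spurious diagonal configuration.
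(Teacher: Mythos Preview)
Your plan is sound and would yield a correct proof, but it takes a different route from the paper's.

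The paper does \emph{not} attempt to build the explicit bijection between the excursion steps of $\rho'$ and an outline walk that you identify as the main obstacle. Instead it argues abstractly: it fixes the connected component $T$ of the configuration graph of $\Aa$ containing the initial configuration, proves the invariant that along $\rho'$ the first token stays above $\rho$ and the second below (a short case analysis of $\delta'$), and then uses the \emph{codeterminism} of $\Aa'$ to conclude that $\rho'$ cannot loop. Since the finite subtree $T_i$ of configurations having a path to $\aconf{q_i}{u_i}{v_i}$ contains only finitely many $\Aa'$-configurations, an infinite non-looping run must exit it; the invariant then forces the first visit to position $i$ to be in state $(\underline{q_i},\overline{q_i})$. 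This pigeonhole-style argument replaces your entire outline bookkeeping by three lines.

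Your approach, by contrast, uses only determinism of $\Aa'$ and pays for it with the explicit outline construction. This is perfectly workable (the transition function is precisely engineered so that each case advances the outline by one edge), and it has the advantage of giving a concrete description of the excursion between consecutive diagonals. One small correction to your sketch: the tree explored between the diagonals at positions $i$ and $i+1$ is not the full tree of predecessors of $q_{i+1}$, but only those branches merging with the spine at position $i+1$; branches merging earlier have already been walked before reaching the $i$-th diagonal. Your invariant that the two heads stay on opposite sides of the spine is exactly the paper's invariant, and it is what rules out both spurious diagonals and the forbidden states $(\underline{q},\underline{q})$, $(\overline{q},\overline{q})$. Your acceptance argument is correct and slightly more detailed than the paper's, which defers it to after the claim.
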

	
	\newcommand{\thetree}{T}
	\newcommand{\avertex}{\mathsf{v}}
	\newcommand{\thepath}{\mathsf{p}}

	\begin{proof}[Proof of claim]
		Let $G$ be the configuration graph of $\Aa$ on the input word $w\in\alp^\omega$.
		The vertices of $G$ are all configurations $\aconf{q}{u}v$ with $w=uv$,
		$u\in\alp^{*}$ and $q\in Q$.
		Edges correspond to transitions: we have an edge
		$\aconf{p}{u}{av}\to\aconf{q}{ua}v$ if $\delta(p,a)=q$.
		Since $\Aa$ is deterministic, there is at most one outgoing edge from each 
		configuration and since $\Aa$ is one-way, thus the graph $G$ is acyclic.
		
		Let \( \thetree \) be the connected component of \( G \) that contains the 
		initial configuration \( \initconf{\infword} \).
		Note that the run $\rho$ corresponds to the only infinite path in $G$
		starting from \( \initconf{\infword} \).
		Any configuration $\aconf{p_{i}}{u_{i}}{v_{i}}$ of $T$ which is not on $\rho$ 
		($p_{i}\neq q_{i}$) will eventually merge with $\rho$: 
		$\aconf{p_{i}}{u_{i}}{v_{i}}\xrightarrow{*}\aconf{p_{j-1}}{u_{j-1}}{v_{j-1}}
		\to\aconf{q_{j}}{u_{j}}{v_{j}}$ with $i<j$ and $p_{j-1}\neq q_{j-1}$.
		We say that $\aconf{p_{i}}{u_{i}}{v_{i}}$ is below (resp.\ above) $\rho$ if
		$p_{j-1}\prec q_{j-1}$ (resp.\ $q_{j-1}\prec p_{j-1}$).
		
		Let us consider the run $\runaprime$ of $\Aa'$ on $\infword$. 
		Due to the definition of the transition function of $\Aa'$,
		the run $\runaprime$ only moves along $T$. 
		Indeed a configuration \( \aconf{(r,s)}{u}{v} \) of \( \runaprime \)
		encodes the position of two tokens,
		each placed either above or below a configuration of \( \thetree \).
		Moreover, the first token is always above the branch $\rho$
		while the second token is always below.
		This can be shown by case analysis of \( \delta' \).
		The two transitions where the upper token goes from above a branch to below are
		the following:
		
		\begin{itemize}
			\item \( \delta'\left( \left( \underline{p}, \overline{q} \right), a \right) =
			\left( \left( \overline{p'}, \overline{q} \right) \right) \) if
			\( \left( \predabv{a}{p}{p'} \right) \) :
			there, we know that \( p \prec p' \),
			so the token ends up on a branch that is above where it was;
			
			\item \( \delta'\left( \left( \underline{p}, \underline{q} \right), a \right) =
			\left( \overline{p}, \overline{q} \right)\) if
			\( \delta_{a}^{-1}(p) = \emptyset \) and \( a \neq \leftend \),
			so \( p \) must have reached the end of the branch it was placed on,
			and we know it was not placed on \( \rho \), because when this branch ends,
			\( a = \leftend \).
		\end{itemize}
		
		A similar observation can be made for transitions where the lower token
		goes from below a branch to above.
		
		We denote by $T_i$ the subtree of $T$ containing all configurations
		having a path to $\aconf{q_{i}}{u_{i}}{v_{i}}$.
		We aim to prove that $\runaprime$ reaches the position $i$,
		and the first time it does is in state $(\underline{q_i},\overline{q_i})$.
		
		We remark that for every cases \ref{item:1} to \ref{item:4}
		of the transition function,
		as long as the transition function is defined the run $\runaprime$
		can continue.
		As we only visit configurations of $T$, as long as $\rho$ is infinite,
		as assumed by Claim~\ref{claim:1dt2R}, so is $\runaprime$.
		
		Next, as $\Aa'$ is reversible, it cannot loop,
		as it would require two different configuration to go to
		the same one to enter the loop, which would break codeterminism.
		So $\runaprime$ starts in $T_i$, does not stop nor loops,
		so since $T_i$ is finite, $\runaprime$ has to end up leaving $T_i$. 
		So $\runaprime$ reaches position $i$, while only moving along $T_i$. 
		As $(\underline{q_i},\overline{q_i})$ is the only possible state
		where $\runaprime$ follows $T_i$ while having its first token
		above $\rho$ and the second below $\rho$,
		$\runaprime$ first reaches $i$ in state $(\underline{q_i},\overline{q_i})$.
		
		As this is true for any position $i$, and since $T_i$ contains $T_{i-1}$,
		we can conclude the proof of Claim~\ref{claim:1dt2R}.
	\end{proof}
	
	Based on this claim, we show that \(  \Lang{\Aa} = \Lang{\Aa'} \).
	Let \( w \) be an infinite word accepted by \( \Aa \),
	and \( \rho \) be the accepting run of \( \Aa \) on \( w \).
	We showed that the configurations \( \aconf{q}{u}{v} \)
	happen in the same order
	in \( \rho \) as configurations of the form
	\( \aconf{\left( \underline{q}, \overline{q} \right)}{u}{v} \) in \( \runaprime \),
	the run on \( w \) of \( \Aa' \).
	Moreover \( \cc'\left( \left( \underline{p}, \overline{p} \right), a \right) < \cc'\left( s, a \right) \)
	for all \( a \in \alp \) and for
	all \( s \) of another form,
	and as \( \left| \infi\left( \rho \right) \right| > 0 \)
	(because \( w \) is infinite and \( \Aa \) has a finite number of states),
	\( min\left\{ t | t \in \infi\left( \rho \right) \right\} =
	min\left\{ t | t \in \infi\left( \runaprime \right) \right\} \).
	So \( \Lang{\Aa'} = \Lang{\Aa} \).
	
	\noindent\textbf{Construction of \( \Tt' \).}
	Let \( \Tt' = \left( Q', \alp, \delta', q_0', \chi', \alpo, \lambda' \right) \)
	be the \( \rbt \) having \( \Aa' \) as underlying automaton,
	with
	\[
	\begin{aligned}
		\lambda'(s, a) = \begin{cases}
			\lambda(q, a) &\text{ if } s = \left( \underline{q}, \overline{q} \right) \\
			\varepsilon &\text{ if } s \text{ is of another form.}
		\end{cases}
	\end{aligned}
	\]
	Because we showed that states of the form
	\( \left( \underline{q}, \overline{q} \right) \) are
	met in the run of \( \Aa' \) on a given word in the same order as states \( q \)
	in the run of \( \Aa \) on the same word,
	the output of \( \Tt' \) is the same as the output of \( \Tt \).
	So we have that \( \sem{\Tt} = \sem{\Tt'} \).
	
	Finally, \( \left| Q' \right| = 4 \left| Q \right|^2 \), justifying the complexity.
\end{proof}

\section{From $\cbsst$ to $\rbt$}\label{sec:cpsst2rpt}

We extend another result from~\cite{DFJL17} about constructing a reversible two-way transducer from a Streaming String Transducer.
The prodcedure and the complexity are similar. 
The main difference is that the procedure only works
thanks to the distinguished register $\out$ of $\cbsst$.
Without it, production could depend on an infinite property of the input word,
which is not realizable by a deterministic (and hence reversible) machine.

\begin{theorem}\label{thm-SSTtoR2W}
  Let $\Tt$ be a $\cbsst$ with $n$ states and $m$ registers.
  Then we can construct a $\rbt$ $\Ss$ with $O(n^2m)$ states such that $\sem{\Tt}=\sem{\Ss}$.
\end{theorem}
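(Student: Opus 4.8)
The plan is to avoid a direct construction and instead \emph{factor} the \cbsst{} \(\Tt\) as the composition of a deterministic one-way parity transducer \(\Dd\) and a reversible two-way parity transducer \(\Ff\) that does not depend on the control of \(\Tt\), so that \(\sem{\Tt}=\sem{\Ff}\circ\sem{\Dd}\). The point of this factorisation is that the two cited results then finish the job: Theorem~\ref{thm:odbtEqRdbt} turns \(\Dd\) into an equivalent \rbt{} \(\Dd'\) of quadratic size, and Theorem~\ref{thm:composition} composes \(\Dd'\) with \(\Ff\) into a single \rbt{} \(\Ss\) of size proportional to the product. All the genuinely two-way and non-co-deterministic reasoning is thereby concentrated in \(\Ff\), whose only memory is a register name, while the state complexity of the control is handled once and for all by Theorem~\ref{thm:odbtEqRdbt}.

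First I would build \(\Dd\). Since the underlying automaton of \(\Tt\) is a deterministic one-way parity automaton, I let \(\Dd\) have the \emph{same} states, initial state, transitions and coloring functions as \(\Tt\), but change its output: reading a transition \(t\) it emits the single letter \(\lambda(t)\) over the finite output alphabet \(\Gamma\subseteq\subst{\alpo}{\R}\) of substitutions used by \(\Tt\). Thus \(\Dd\) is a \odbt{} with \(O(n)\) states and the same \(k\) color conditions, and on input \(w\) with run \(t_0t_1t_2\cdots\) it outputs the infinite substitution word \(\lambda(t_0)\lambda(t_1)\cdots\in\Gamma^\omega\). Because \(\Dd\) keeps exactly the acceptance condition of \(\Tt\) and always produces an infinite output, \(\dom\Dd\) equals the set of words accepted by the underlying automaton of \(\Tt\).

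The heart of the proof is \(\Ff\), a reversible two-way transducer over input alphabet \(\Gamma\), with only \(O(m)\) states, that \emph{evaluates} an infinite word of substitutions \(\sigma_0\sigma_1\cdots\in\Gamma^\omega\) into \(\bigsqcup_i(\sigma_0\cdots\sigma_{i-1})(\outreg)\in\alpo^\omega\). Writing \(\sigma_i(\outreg)=\outreg\,\beta_i\) (legal thanks to the out-register constraint), the target output is \(\prod_{i\ge0}\nu_i(\beta_i)\), where the valuation \(\nu_i\) of the registers after the prefix \(\sigma_0\cdots\sigma_{i-1}\) is recovered by expanding register names backwards via \(\nu_i(r)=\nu_{i-1}(\sigma_{i-1}(r))\), down to \(\nu_0(r)=\varepsilon\). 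So \(\Ff\) advances a forward frontier letter by letter; at position \(i\) it scans \(\beta_i\) left to right, printing letters of \(\alpo\) directly and, on meeting a register \(r\), performing a leftward \emph{excursion} that prints \(\nu_i(r)\) and returns. An excursion that must print \(\nu_i(r)\) moves to position \(i-1\), scans \(\sigma_{i-1}(r)\), recurses on the registers it contains, and so on; its only memory is the register currently carried, which is why \(O(m)\) states suffice. Copylessness is what makes this walk implementable and, crucially, \emph{reversible}: each register value produced at a given position is consumed at most once downstream, so the register-flow is a forest and the whole traversal is an Euler tour whose every step has a unique predecessor. This is where I expect the main difficulty: checking, transition by transition, that \(\Ff\) is both deterministic and co-deterministic (in particular that, on returning from an excursion, the symbol at which scanning resumes is uniquely recoverable), and that the frontier genuinely progresses to infinity so that the whole input is read. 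The parity condition of \(\Ff\) is trivial (all transitions colored \(0\)); the ``infinite output'' requirement is enforced for free by the transducer semantics, since a bounded production simply falls outside \(\dom\Ff\).

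Finally I would assemble the pieces; this is essentially the finite-word evaluation construction of \cite{DFJL17}, the novelty being that the dedicated register \(\outreg\) lets the frontier emit output incrementally rather than only at the end of a finite input, which is exactly what an infinite computation needs. Applying Theorem~\ref{thm:odbtEqRdbt} to \(\Dd\) gives an \rbt{} \(\Dd'\) with \(O(n^2)\) states and the same color conditions, and \(\Ff\) has \(O(m)\) states with trivial colors. Composing with Theorem~\ref{thm:composition} yields \(\Ss=\Ff\circ\Dd'\) with \(O(n^2\cdot m)\) states, whose color conditions are the conjunction of those of \(\Dd'\) and \(\Ff\), i.e.\ exactly those of \(\Tt\). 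Correctness follows from \(\sem{\Ss}=\sem{\Ff}\circ\sem{\Dd'}=\sem{\Ff}\circ\sem{\Dd}=\sem{\Tt}\): on \(w\in\dom\Tt\) the automaton accepts via \(\Dd\)'s colors, the substitution word \(\sem{\Dd}(w)\) evaluates under \(\Ff\) to the unbounded limit \(\bigsqcup_i\nu_i(\outreg)=\sem{\Tt}(w)\), and a finite \cbsst{} output falls outside \(\dom\Ff\) exactly as it falls outside \(\dom\Tt\).
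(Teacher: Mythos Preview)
Your proposal is correct and follows essentially the same approach as the paper: factor \(\Tt\) as \(\Ff\circ\Dd\) where \(\Dd\) is the \odbt{} that outputs the substitution sequence and \(\Ff\) is a reversible evaluator with \(O(m)\) states (the paper uses \(\R\times\{i,o\}\), matching your ``register carried plus direction of the excursion''), then apply Theorems~\ref{thm:odbtEqRdbt} and~\ref{thm:composition}. Your identification of copylessness as the source of co-determinism and of the distinguished register \(\outreg\) as the mechanism enabling incremental output on infinite words is exactly the paper's argument.
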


\begin{proof}
	We prove that $\sem{\Tt}=\sem{\Ff}\circ\sem{\Dd}$ where $\Dd$ is a $\odbt$ and $\Ff$ is a $\rbt$.
	The transducer $\Dd$ has the same underlying automaton as $\Tt$, but instead of applying
	a substitution $\sigma$ to the registers, $\Dd$ enriches the input letter with $\sigma$.
	Then, the transducer $\Ff$ uses the flow of registers output by $\Dd$ to output the
	contents of the relevant registers in a reversible fashion.
	Finally, we construct a $\rbt$ $\Dd'$ equivalent to $\Dd$ by 
	Theorem~\ref{thm:odbtEqRdbt} and we obtain the desired $\rbt$ $\Ss=\Ff\circ\Dd'$ by 
	Theorem~\ref{thm:composition}.

	\noindent\textbf{Formal Construction.}
	Let $\Tt=(Q,\alp,\delta,q_0,\chi,\alpo,\R,\outreg,\lambda)$ be a $\cbsst$.

	We define the $\odbt$ by $\Dd=(Q,\alp,\delta,q_0,\chi,\subst{\alpo}{\R},\gamma)$ where 
	$\gamma(q,a)=\lambda(q,a)$.
	The reversible transducer is
	$\Ff=(Q',\subst{\alpo}{\R},\alpha,q_0',\emptyset,\alpo,\beta)$ where:
	\begin{itemize}
		\item $Q'=\R\times \{i,o\}$ with $Q^+=\R\times\{o\}$ and $Q^-=\R\times\{i\}$.  We will
		denote by $r_i$ (resp.\ $r_o$) the state $(r,i)$ (resp.\ $(r,o)$).
		Informally, being in state $r_i$ means that we need to compute the content of $r$,
		while state $r_o$ means we have just finished computing it.
		
		\item The initial state is $q_0'=\outreg_o$.
		
		\item There is no accepting condition;
		
		\item $\alpha$ and $\beta$ both read a state in $Q'$ and a substitution
		$\sigma\in\subst{\alpo}{\R}$, or the leftmarker $\leftend$, which is treated as a
		substitution $\sigma_{\leftend}$ associating $\epsilon$ to every register.  We
		define $\alpha$ and $\beta$ as follow:
		\begin{itemize}
			\item If the state is $r_i$ for some $r\in\R$.  
			\begin{itemize}
				\item If $\sigma(r)=v\in\alpo^*$ contains no register, then
				$\alpha(r_i,\sigma)=r_o$ and $\beta(r_i,\sigma)=v$.
				
				\item If $\sigma(r)=vs\gamma$ with $v\in\alpo^{*}$ and $s\in\R$ is the first
				register appearing in $\sigma(r)$, then $\alpha(r_i,\sigma)=s_i$ and
				$\beta(r_i,\sigma)=v$.
			\end{itemize}

			\item If the state is $r_o$ for some $r\in\R$.  
			Recall that from the definition of copyless SSTs, for any register $r$, there exists at 
			most one register $t$, such that $r$ occurs in $\sigma(t)$, and in this case $r$ 
			occurs exactly once in $\sigma(t)$.
			\begin{itemize}
				\item Suppose that for some register $s$ we have $\sigma(s)=\gamma rv$ with 
				$v\in\alpo^{*}$. 
				Then $\alpha(r_o,\sigma)=s_o$ and $\beta(r_o,\sigma)=v$.
				
				\item Suppose that for some register $t$ we have $\sigma(t)=\gamma rvs \gamma'$ with 
				$v\in\alpo^{*}$ and $s\in\R$. 
				Then $\alpha(r_o,\sigma)=s_i$ and $\beta(r_o,\sigma)=v$.

				\item If $r$ does not appear in any $\sigma(s)$, then the computation stops and
				rejects.  This somehow means that we are computing the contents of a register that
				is dropped in the original SST. This will not happen if what is fed to $\Ff$ is 
				produced by $\Dd$.
			\end{itemize}
		\end{itemize}
	\end{itemize}
	
	\noindent\textbf{Reversibility of $\Ff$.}
	The transducer $\Ff$ is clearly deterministic by construction.
	Let us prove that it is codeterministic.
	To this end, let $s_i$ be a state and $\sigma$ a substitution.  Looking at the
	transition function, its antecedent $\alpha^{-1}(s_{i},\sigma)$ is either $r_i$ if
	$\sigma(r)$ starts with $vs$ for some word $v\in \alpo^*$, or $r_o$ if there is some
	register $t$ that contains $rvs$ for some word $v\in \alpo^*$.
	Since we only consider copyless substitutions, there is at most one register that
	contains $s$.  The two options are then mutually exclusive, as one requires that $s$ be
	the first register to appear, and the second requires that there is a register before
	$s$.
	
	The proof for a state $s_o$ is similar.
	The antecedent $\alpha^{-1}(s_{o},\sigma)$ is either $s_i$ if $\sigma(s)$ contains no
	register, or $r_o$ if $r$ is the last register appearing in $\sigma(s)$.  Since these
	two are mutually exclusive, we get that $\Ff$ is reversible.

	\noindent\textbf{Correctness of the construction.}
	First, let us remark that the domain of $\Dd$ is the set of input words $u$ such that
	$\Tt$ has an infinite accepting run over $u$ since they share the same underlying automaton.
	So the domain of $\Tt$ is the set of words in the domain of $\Dd$ on which $\Tt$
	produces an infinite word.
	
	Let $(q_{0},\nu_{0})\xrightarrow{a_{0}}(q_{1},\nu_{1})\xrightarrow{a_{1}}(q_{2},\nu_{2})
	\xrightarrow{a_{2}}(q_{3},\nu_{3})\cdots$ be the accepting run of some input word 
	$u=a_{0}a_{1}a_{2}\cdots\in\alp^{\omega}$ in the domain of the $\cbsst$ $\Tt$.
	For $j\geq0$, let $\sigma_{j}=\lambda(q_{j},a_{j})$ so that 
	$\sem{\Dd}(u)=\sigma_{0}\sigma_{1}\sigma_{2}\cdots$.
	
	We prove by induction that for every position $j\geq0$ of $u$, the run of the transducer
	$\Ff$ on $\sem{\Dd}(u)$ reaches the state $\outreg_o$ in position $j$ having produced
	the content $\nu_{j}(\outreg)$
	of the run of $\Tt$ on $u$ up to position $j$.
	For $j=0$, there is nothing to prove as the registers are initially empty and
	$\outreg_o$ is the initial state of $\Ff$.
	
	Now suppose that the run of $\Ff$ on $\sem{\Dd}(u)$ reaches some position $j$ in state
	$\outreg_o$, having produced $\nu_{j}(\outreg)$.
	Recall that, by definition of $\lambda$, the substitution $\sigma_{j}=\lambda(q_{j},a_{j})$
	used by $\Tt$ at position $j$ is such that $\sigma_{j}(\outreg)=\outreg\cdot\gamma$.
	Then if there is no other register, i.e.,
	if $\gamma=v\in \alpo^*$, by definition of $\alpha$, $\Ff$ moves to $j+1$ in state
	$\outreg_o$ and produces $v$, so that its cumulated production is 
	$\nu_{j}(\outreg)\cdot v=\nu_{j+1}(\outreg)$.
	
	The interesting case is of course when some registers are flown to $\outreg$.
	Let $r$ be the second register of $\sigma_{j}(\outreg)$, i.e., $\sigma_{j}(\outreg)$
	starts with $\outreg \cdot v r$ with $v\in \alpo^*$.
	Then, by definition of $\alpha$ and $\beta$, $\Ff$ stays at position $j$ switching to
	state $r_i$ and producing $v$.
	Then, using Claim~\ref{rrRunofR}, $\Ff$ reaches $r_o$ at position $j$ producing the
	content of $\nu_{j}(r)$.
	We repeat this process to exhaust all registers appearing in $\sigma(\outreg)$, reaching
	finally state $\outreg_o$ at position $j+1$ with cumulated production 
	$\nu_{j}(\sigma_{j}(\outreg))=\nu_{j+1}(\outreg)$, proving the induction.
	
	\begin{claim}\label{rrRunofR}
		For all positions $j\geq0$ and registers $r\in\R$, there exists a right-to-right run
		$(r_i,r_o)$ of $\Ff$ starting and ending at position $j$ and which produces the
		content of $\nu_{j}(r)$.
	\end{claim}
	
	\begin{proof}[Proof of Claim~\ref{rrRunofR}.]
		The proof is by induction on $j$.
		If $j=0$ then $\nu_{0}(r)=\varepsilon$ and the run of $\Ff$ starting at position $0$
		in state $r_{i}$ reads $\sigma_{\leftend}$.  By definition of $\alpha$ and $\beta$ the
		run produces $\sigma_{\leftend}(r)=\varepsilon$ and switches from $r_i$ to $r_o$,
		proving the claim for $j=0$.
		
		Now assume that the claim is true for $j$.  Consider the run $\rho$ of $\Ff$ starting
		in state $r_{i}$ at position $j+1$.  The run $\rho$ starts by reading $\sigma_{j}$.
		If $\sigma_{j}(r)=v\in\alpo^{*}$ then the run produces $v=\nu_{j+1}(r)$ and switches
		from $r_i$ to $r_o$, proving the claim.  The second case is when $\sigma_{j}(r)$
		starts with some $vs$ with $v\in\alpo^{*}$ and $s\in\R$.  Then, the first transition
		of $\rho$ produces $v$ and moves to position $j$ in state $s_{j}$.
		By induction hypothesis, there is a right-right $(s_i,s_o)$-run starting at $j$ and
		producing $\nu_{j}(s)$.  Then, the run reads $\sigma_{j}$ in state $s_o$ and, either
		goes $r_o$ in position $j+1$ producing $v'\in\alpo^{*}$ if $\sigma_{j}(r)$ ends with
		$sv'$ ($s$ is the last register flown to $r$), or goes to $t_i$ in position $j$
		producing $v'\in\alpo^{*}$ if $\sigma_{j}(r)$ contains the factor $sv't$ ($t$ is
		the next register flown to $r$).
		By iterating this process again, we exhaust the registers flown to $r$, produces their
		content meanwhile.  Finally, the run $\rho$ ends in position $j+1$ with state $r_o$
		and has produced $\nu_{j+1}(r)=\nu_{j}(\sigma_{j})(r)$, proving the claim.
	\end{proof}
	
	Coming back to the proof of correctness, we have shown that for all positions $j\geq0$, 
	$\Ff$ has an initial run on $\sem{\Dd}(u)$ reaching position $j$ in state $\outreg_{o}$ 
	and producing $\nu_{j}(\outreg)$. This proves that $\sem{\Dd}(u)$ is in the 
	domain of $\Ff$ (the maximal initial run of $\Ff$ on $\sem{\Dd}(u)$ reads the whole 
	word and $\Ff$ accepts only if $\Tt$ produces infinitely often) and $\sem{\Ff}(\sem{\Dd}(u))=\bigsqcup_{j\geq0}\nu_{j}(\outreg)=\sem{\Tt}(u)$.
	Therefore, $\sem{\Tt}=\sem{\Ff}\circ\sem{\Dd}$.
	
	Finally, we construct a $\rbt$ $\Dd'$ equivalent to $\Dd$ by 
	Theorem~\ref{thm:odbtEqRdbt} and we obtain the desired $\rbt$ $\Ss=\Ff\circ\Dd'$ by 
	Theorem~\ref{thm:composition}.
\end{proof}

\section{From $\dbt$ to $\cbsst$}\label{sec:dpt2cpsst}
The construction presented in this section is the most involved of the paper. It is adapted from~\cite{DJR16}. 
Given a deterministic two-way transducer, we construct a $\cbsst$ that realizes the same function. Here again, the main complications from infinite words are dealing with the acceptance condition and the impossibility to get the final configuration of the run.

\begin{theorem}\label{thm:2DTtoSST}
  Given a $\dbt$ $\Tt$ with $n>0$ states and $k$ coloring functions over $\ell$ colors, we
  can construct a $\cbsst$ $\Ss$ with $O(\ell^{kn}(2n)^{2n})$ states, $2n-1$ registers and
  $k$ coloring functions over $\ell$ colors such that $\sem{\Tt}=\sem{\Ss}$.
\end{theorem}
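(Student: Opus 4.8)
The plan is to generalize the classical Shepherdson simulation of a two-way machine by a one-way machine, enriching it with registers (to carry the output) and with color bookkeeping (to carry the acceptance condition). Throughout, fix the input $w = a_0 a_1 a_2 \cdots \in \alp^\omega$ and, since $\Tt$ is deterministic, let $\rho$ denote its unique run. For a boundary $i$ (between positions $i-1$ and $i$) I would record the \emph{crossing profile} of $\rho$ on the prefix $\leftend a_0 \cdots a_{i-1}$: the partial function sending a backward state $p \in Q^-$ that enters the prefix from the right to the forward state of $Q^+$ in which $\rho$ next exits the prefix on the right (undefined if it loops), together with the state $\iota_i \in Q^+$ in which the \emph{initial} excursion (the run from $\leftend q_0$ to its first left-to-right crossing of boundary $i$) exits. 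These profiles are exactly what the state of $\Ss$ stores; they update deterministically on reading one letter by composing with the single-letter transitions of $\Tt$, and since a non-looping run visits each boundary at most $2n-1$ times, this both bounds the profile by the $(2n)^{2n}$ factor and caps the number of threads, hence of registers.

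For the output I would keep registers indexed by the threads (the segments of $\rho$ between consecutive crossings of the current boundary), at most $2n-1$ of them, with the distinguished register $\outreg$ holding the production of the initial excursion, i.e.\ the output of $\rho$ from $\leftend q_0$ up to its \emph{first} left-to-right crossing of the current boundary. The point of committing via $\outreg$ (rather than trying to detect the \emph{last} crossing, which would require knowing the future) is that the first-crossing time of boundary $i$ is finite, strictly increasing in $i$, and tends to infinity whenever $\rho$ reads the whole word; hence the sequence of contents of $\outreg$ is monotone and its limit is exactly $\sem{\Tt}(w)$. On reading $a_i$, the segment of $\rho$ joining the first crossing of boundary $i$ to the first crossing of boundary $i+1$ may dive back into the prefix, so I would append to $\outreg$ a copyless concatenation of the excursion registers dictated by the transitions at position $i$. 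Copylessness is guaranteed because $\rho$ is a single path, so each thread is traversed exactly once.

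For acceptance I would color each transition of $\Ss$ reading $a_i$, for every coloring function $\cc \in \chi$, by the minimum $\cc$-value taken by the transitions of $\rho$ inside that same segment; the per-thread minima needed for this are finite data stored in the state, contributing the $\ell^{kn}$ factor. Soundness rests on the fact that the segments between successive first-crossing times partition the whole run, so every transition of $\rho$ is committed in exactly one segment. Writing $m = \min(\cc(\infi(\rho)))$, all colors strictly below $m$ occur in only finitely many segments, so past some step the per-segment minima are $\ge m$ and equal $m$ infinitely often; thus the least color seen infinitely often by $\Ss$ matches that of $\Tt$ for each $\cc$. Finally, $w$ fails to be read by $\Tt$ exactly when some initial excursion loops, i.e.\ when the profile becomes undefined, in which case I would send $\Ss$ to a rejecting sink; combined with the out-unboundedness criterion of $\cbsst$ (which captures finiteness of the output) this yields $\dom{\Ss} = \dom{\Tt}$.

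The step I expect to be the main obstacle is precisely this reconciliation of the acceptance condition: because the two-way run produces its transitions (and their colors) in an order unrelated to the left-to-right reading of $\Ss$, one must argue carefully that committing output and colors segment-by-segment --- using only the \emph{first} crossing, never the last --- still recovers both the correct infinite output and the correct minimum-infinitely-often color for each of the $k$ parity conditions. The monotonicity of first-crossing times and the finite-occurrence argument for colors below $m$ are what make this go through, and verifying that the register recombination remains copyless across all cases of the single-letter profile update is the most delicate piece of bookkeeping.
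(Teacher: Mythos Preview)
Your overall architecture---Shepherdson profile plus an $\outreg$ register committed at the \emph{first} left-to-right crossing of each boundary, with per-thread colour minima---is exactly the plan the paper follows, and your treatment of the acceptance condition (segments partition the run; colours below the liminf occur in only finitely many segments) is correct and matches the paper's argument.

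The genuine gap is the copylessness claim. You argue ``copylessness is guaranteed because $\rho$ is a single path, so each thread is traversed exactly once'', but the SST cannot restrict itself to the threads that $\rho$ will actually use: at boundary $i$ it must maintain a register for \emph{every} potential right-right excursion on the prefix, since it does not know the future of the input. The problem is that when you extend the prefix by a letter $a$, two distinct backward states $p_1,p_2\in Q^-$ may satisfy $\delta(p_1,a)=\delta(p_2,a)=p\in Q^-$; then the right-right runs on $wa$ starting from $p_1$ and from $p_2$ both continue with the \emph{same} right-right run $(p,\cdot)$ on $w$, and the register holding the output of that shared run must be used in both updates. Your state---a partial function $Q^-\to Q^+$ together with $\iota_i$ and per-thread colours---does not carry enough information to avoid this copy, and in fact it cannot: two prefixes with the same partial function but different merge histories require different copyless updates.

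The paper's fix is to replace the partial function by a \emph{merging forest}: a forest whose leaves are the backward states with a right-right run, whose roots are the exit states in $Q^+$, and whose internal branching records exactly where and in what order the right-right runs merge. Registers are attached to the \emph{edges} of the forest rather than to whole runs; the shared suffix of two merging runs then lives in a single edge near the root, and the update becomes copyless because each edge of the old forest feeds into at most one edge of the new forest (or into $\outreg$). This refinement is what produces both the $2n-1$ register bound (edges of the forest) and the $(2n)^{2n}$ factor in the state count (via a Cayley-type count of forests), and it is the ``most delicate piece of bookkeeping'' you anticipated---only it lives in the register structure, not in the colour reconciliation.
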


\begin{proof}[Sketch of proof.]
  We improve on the classical Shepherdson construction~\cite{Shepherdson59} from two-way machines to one-way.
  In this construction, the one-way machine computes information about the runs of the
  two-way machine on the prefix read up to the current position.  More precisely, it
  stores the state reached on reading the prefix starting at the initial state, as well as
  a succinct representation of the information about all right-right runs.  Further, by
  associating a register to each run in this representation, we can construct an SST
  equivalent to the two-way machine.
  
  Upon reading some letter $a\in\alp$, the prefix $w$ we are interested in grows to $wa$,
  and consequently, we have to update the information about the right-right runs.  While
  some right-right runs on the prefix $w$ may be extended to right-right runs on $wa$,
  some runs (which cannot be extended) may die, and hence needs to be removed.  A third
  possiblity is that, upon reading a letter $a$, some right-right runs may merge. This implies that the construction would not be copyless, as if two right-right runs over $ua$ are the extension of a same right-right run over $u$, the register storing the production of the run over $u$ needs to be copied in both runs over $ua$.

  In order to compute a copyless SST, we improve this construction by refining the
  information stored by the one-way machine: it stores not only the set of right-right
  runs, but also whether they merge and the respective order of the merges.  Essentially,
  the latter representation keeps track of the \emph{structure} of the right-right runs on
  the prefix read up to the current position, as well as the \emph{output} generated by
  these runs.  The resulting information can be represented as a forest, which is a
 (possibly empty) set of trees.  Then we associate a register to each edge of
  the forest, so that the update function can be made copyless.  The number of registers
  required is still linear in the number of states.
\end{proof}
\begin{wrapfigure}[17]{r}{.35\textwidth}
	\centering
  \includegraphics[width=.35\textwidth]{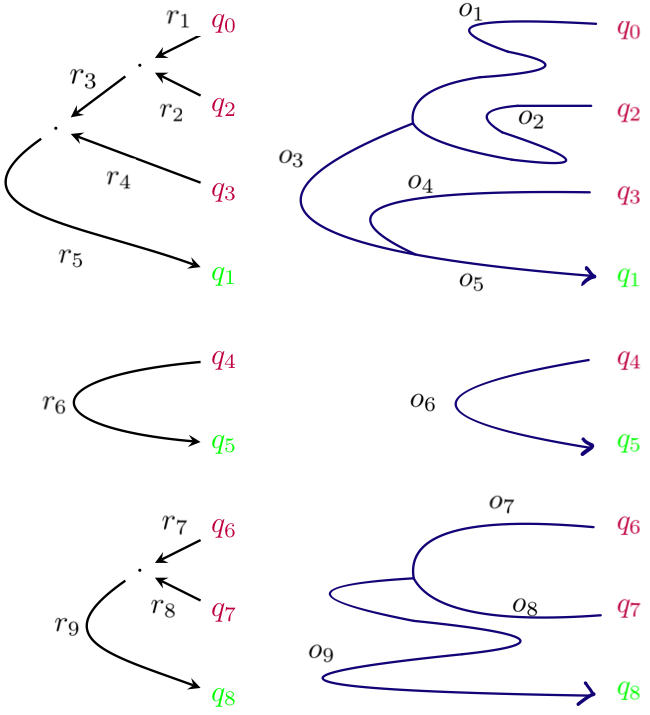}

  \caption{Example of a merging forest (on the left) corresponding to right-right runs (on the right) of a two-way machine for some prefix $u$ of an input word.  
  }
	\label{fig:merging-forest}
\end{wrapfigure}

Formally, we call the structure used to model the right-right runs \emph{merging forests}, which we define as follow:
\begin{definition}\label{def:mforests}
  Given a set of states $Q=Q^{+}\uplus Q^{-}$, a set of coloring functions $\chi$
  and an integer $\ell>0$,
  we define the \emph{merging forests} on $(Q,\chi,\ell)$, denoted $\mf{Q}$, as the set of forests $F$
  such that:
  \begin{itemize}[nosep]
    \item the leaves of all trees of $F$ are labeled by distinct elements of $Q^-$,

    \item the roots of all trees of $F$ are labeled by distinct elements of $Q^+$,

    \item all unary nodes (exactly one child) are roots.
  \end{itemize}
  The leaves are also labeled by a $\chi$-tuple of integers less than $\ell$.
\end{definition}

Informally, an element $F\in\mf{Q}$ describes a set of right-right runs, such that if $q$
is the root of a tree and $p$ is one of its leaves, then $(p,q)$ is a right-right run.
Notice that, if two leaves $x$ and $y$ belong to the same tree, then the right-right runs
starting in $x$ and $y$ will merge.
The structure of the tree reflects the order in which the runs sharing the same root merge.
The integer labels of leaves serve as coloring for the parity acceptance condition.
An example of a merging forest is depicted in Figure~\ref{fig:merging-forest}.

Note that in Figure~\ref{fig:merging-forest}, the states in $Q^{-}$ are depicted in purple, while the states in $Q^{+}$ are
depicted in green.
The forest comprises of 3 trees - one with root $q_1$ and leaves $q_0, q_2$ and $q_3$, the second with root $q_5$ and leaf $q_4$, and the third with root $q_8$ and leaves $q_6$ and $q_7$.
For each tree, there are right-right runs from its leaves to the root.
For instance, from the merging forest in Figure~\ref{fig:merging-forest}, we can infer
that there are three right-right runs entering $u$ on the right in states $q_0$, $q_2$
and $q_3$ respectively, and emerging out of $u$ in state $q_1$, moreover the run
starting from $q_{0}$ first merges with the run starting from $q_{2}$ and these two runs
then merge with the one starting from $q_{3}$.
The figure also depicts the register corresponding to the edges of the forest. 
Further, the output generated by the part of the run labeled $o_1$ is stored in
the register $r_1$, the output for the part labeled $o_2$ is stored in $r_2$, $o_3$ in $r_3$ and $o_5$ in $r_5$.

We are now ready to give the formal construction of Theorem~\ref{thm:2DTtoSST}.
We defer the proof of correctness to later in this section.
\begin{proof}[Proof of Theorem~\ref{thm:2DTtoSST}.]
  Given a $\dbt$ $\Tt=(Q, \alp,\delta,q_0,\chi,\alpo,\lambda)$, we construct 
  a $\cbsst$ $\Ss=(Q',\alp,\alpha,q_0',\chi',\alpo,\R,\outreg,\beta)$ such that:
  \begin{itemize}[nosep]
    \item $Q'=Q\times\mf{Q}$,
    
    \item $q_0'=(q_{0},F_0)$ where $F_0$ is the forest having only leaves and roots and
    edges from a leaf $u$ labeled $p\in Q^-$ to a root $v$ labeled $q\in Q^+$ if
    $\delta(p,\leftend)=q$. 
    
    \item $\R$ is a set of registers of size $2|Q|-1$ with a distinguished register
    $\outreg$.
    
    \item the coloring functions $\chi'=\{\cc'\mid \cc\in\chi\}$ are described below.
    
    \item the definitions of $\alpha$ and $\beta$ are more involved and given below.
  \end{itemize}
  
For each merging forest $F\in\mf{Q}$, we fix a map $\xi_{F}$ associating distinct
registers from $\R\setminus\{\outreg\}$ to edges of $F$.  This is possible since the
number of edges in $F$ is at most $2|Q|-2$ (see Lemma~\ref{lem-sizeOfMF}).

For simplicity sake, we assume that
for each edge $(u,v)$ of $F_{0}$ corresponding to transition $\delta(p,\leftend)=q$, the
register $\xi_{F_{0}}(u,v)$ is initialized with the production $\lambda(p,\leftend)$. Note that considering initialized registers does not add expressiveness, as it could be simulated using a new unreachable initial state.
Other registers are initially empty.

The definitions of the transition function $\alpha$ and the update function $\beta$ are
intertwined.
Let $(q,F)\in Q\times\mf{Q}$ be a state of $\Ss$ and $a$ a letter of $\alp$.
We describe the state $(p,F')=\alpha((q,F),a)$.
First we construct an intermediate \emph{graph} $G$ that does not satisfy the criteria of
the merging forests, then explain how $G$ is transformed into a merging forest
$F'\in\mf{Q}$. The steps of the construction are depicted in Figure~\ref{fig:merging-forest-final}.

\smallskip\noindent\textbf{Construction of G.} %
The graph $G$ is built from $F$ as follows.  
First, we add new isolated nodes $Q_{c}=\{q_{c}\mid q\in Q\}$ to the forest $F$.  
These nodes will serve as the roots and leaves of $F'$.
For each transition $\delta(p,a)=q$, we will add an edge connecting these new
nodes and the roots and leaves of $F$.  We also extend the labelling $\xi_{F}$ to a
labelling $\xi_{G}$ by adding productions $\lambda(p,a)$ to the new edges of $G$.

Let $p\in Q$ and let $q=\delta(p,a)$.  
If $p\in Q^+$ is the label of the root $u$ of some tree in $F$, we add an edge from $u$ to either $q_c$ if $q\in Q^+$, or to $v$ if $q\in Q^-$ and there exists a leaf $v$ labeled $q$ in $F$.
If $p\in Q^{-}$, we add an edge from $p_c$ to either $q_c$
if $q\in Q^+$, or to $v$ if $q\in Q^-$ and there exists a leaf $v$ labeled $q$ in $F$.
The added edge is labeled $\lambda(p,a)$ by $\xi_{G}$.
The construction is illustrated in the first two steps of Figure~\ref{fig:merging-forest-final}.
Note that $G$ may now have cycles.

\smallskip\noindent\textbf{The new state $p$.} %
To compute the first component of the new state of $\Ss$, let $r=\delta(q,a)$.  If $r$ belongs to $Q^+$, then $p=r$ and  $\lambda(q,a)$ is appended to the output register:
$\beta((q,F),a)(\outreg)=\outreg\cdot\lambda(q,a)$. 
Moreover, the colors are directly inherited: $\bar{\cc}((q,F),a)=\cc(q,a)$ for 
all $\cc\in\chi$.
Otherwise, assume that there is a leaf $u$ in $F$ labeled $r\in Q^{-}$ (if not, the
transition is undefined). We consider the maximal path in $G$ starting from $u$. If 
this path is looping or if it ends in a root of $F$ then the transition is undefined.
Otherwise, it ends in a new node $v$ of $G$.
Let $p\in Q^{+}$ be the state such that $v=p_{c}$.
We append to the $\outreg$ register first $\lambda(q,a)$ and then the $\xi_{G}$
labels of the edges of the path from $u$ to $v$ in $G$, in the order of the path.
These $\xi_{G}$ labels are either registers given by $\xi_{F}$ for edges of $F$, or local
outputs of the form $\lambda(x,a)$ for the new edges, i.e., those in $G \setminus F$.
Moreover, for each $\cc\in\chi$, we let $\cc'((q,F),a)$ be the minimum of (1) the
$\cc$-values of the labels of leaves of $F$ appearing in the path from $u$ to $v$ in
$G$, and (2) the $\cc$-values of the transitions used to create the new edges of $G$ in
this path.

The third figure of Figure~\ref{fig:merging-forest-final} illustrates the case  where $\delta(q,a)=q_4\in Q^{-}$. We then look at the path starting in $q_4$, which leads to the state $(q_8)_c$ after reading $a$.

\smallskip\noindent\textbf{Construction of $F'$ and the update of registers other than $\outreg$.} %
This is illustrated by the third and fourth figures of Figure~\ref{fig:merging-forest-final}.
We erase all nodes (and adjacent edges) that are on a cycle of $G$ since such 
cycles cannot be part of an accepting run of $\Tt$ (see the part of $G$ boxed in blue). The resulting graph is now a acyclic. 
We also erase all nodes and edges which are not on a path from a leaf in $Q_{c}^{-}$ to a 
root in $Q_{c}^{+}$ since they cannot be part of a right-right run on $ua$ (see the parts boxed in yellow).
Finally, we erase the tree with root $v=p_c$ in the second case of the definition of $p$
above.  Indeed, should any right-right run simulated by this tree appear later, the
resulting run would loop on a finite prefix of the input (see the part boxed in green).

\begin{figure}[tbp]
	\centering
	\begin{subfigure}{\textwidth}
    \centering
		\includegraphics[width=14cm]{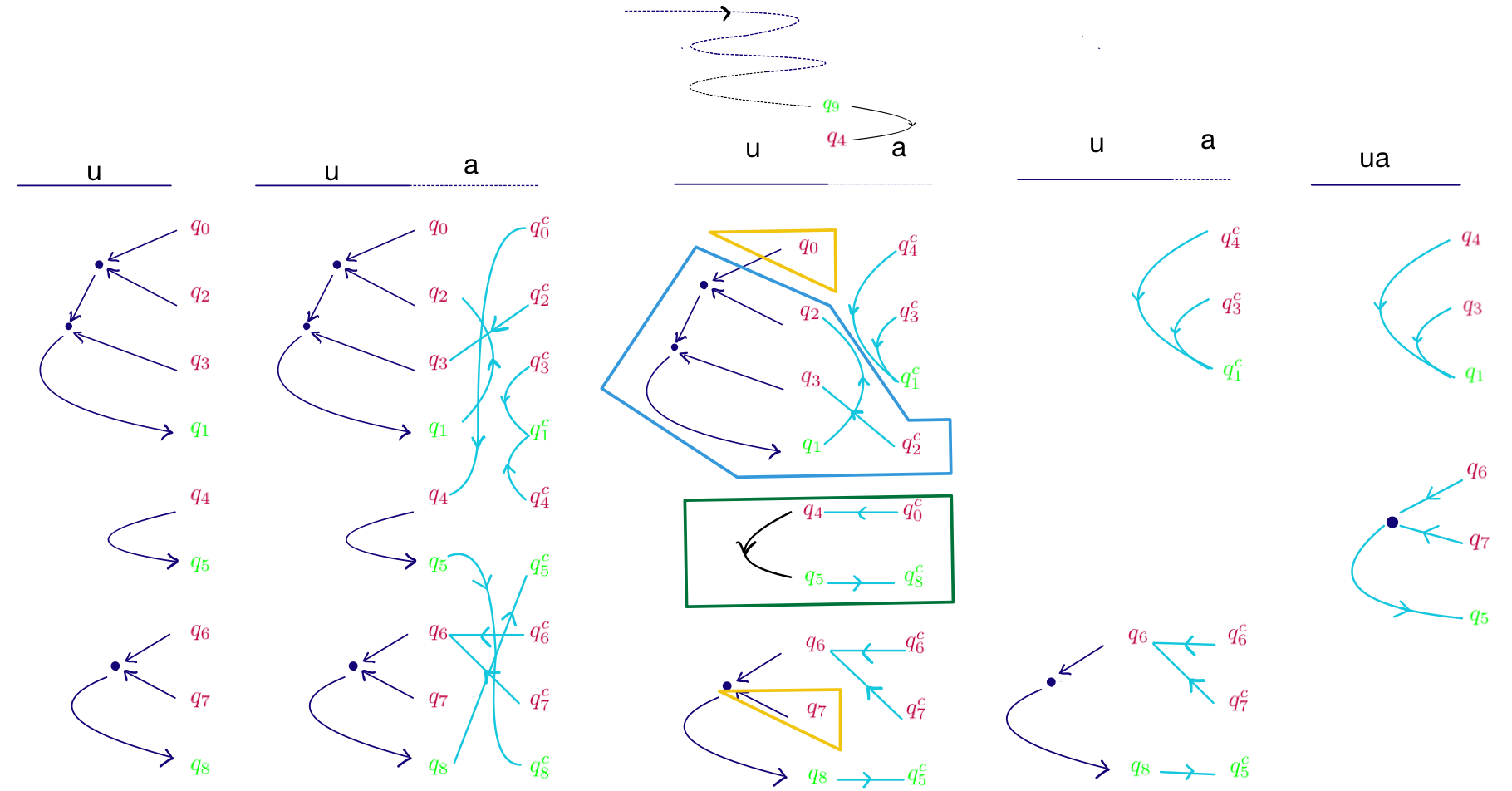}
	\end{subfigure}
	\caption{
Illustration of the procedure to calculate $F'$ from $F$ and a letter $a$.\\
The first figure is $F$, the second is the graph $G$ built from $F$ and the transition occuring at $a$. 
The third figure illustrates the trimming done from $G$ to obtain a forest depicted in the fourth figure.
Finally, we merge unary internal nodes to obtain the merging forest of the last figure.
	}
	\label{fig:merging-forest-final}
\end{figure}

The resulting forest has leaves in $Q_{c}^{-}$ and roots in $Q_{c}^{+}$.  Each remaining
new leaf or root $q_{c}\in Q_{c}$ is labeled $q$, i.e., by dropping the $c$ index.
A new leaf $q_{c}\in Q^{-}$ is labeled by a $\chi$-tuple $(m_{\cc})_{\cc\in\chi}$
of integers less than $\ell$ where $m_{\cc}$ is the minimum of (1) the $\cc$-component of
the labels of leaves of $F$ appearing in the branch in $G$ from $q_c$ to its root, and (2)
the values $\cc(p,a)$ of the transitions used to create the new edges of $G$ in this
branch.
We forget the initial labeling of leaves and roots of $F$.

To obtain a merging forest, it remains to remove unary internal nodes. This is shown between the fourth and fifth figures of Figure~\ref{fig:merging-forest-final}.
We replace each \emph{maximal} path $\pi=u_{0},u_{1},u_{2},\cdots,u_{n-1},u_{n}$
($n\geq1$) with $u_{1},\ldots,u_{n-1}$ unary nodes by a single edge $e=(u_{0},u_{n})$.
We obtain the merging forest $F'$.
The update function is simultaneously defined by
$$
\beta((q,F),a)(\xi_{F'}(e))=\xi_{G}(u_{0},u_{1})\xi_{G}(u_{1},u_{2})\cdots
\xi_{G}(u_{n-1},u_{n}) \,.
$$
Notice that for each remaining edge $f$ of $G$ we have either $f\in F$ and 
$\beta((q,F),a)(\xi_{F'}(f))=\xi_{G}(f)=\xi_{F}(f)$
or $f$ is a new edge and $\beta((q,F),a)(\xi_{F'}(f))$ is set to some $\lambda(s,a)$.  Since each edge $f$ of $F$
contributes to at most one edge $e$ of $F'$, or to the path from $u$ to $v=p_{c}$ which
flows into the $\outreg$ register (but not both), it implies that the
substitution $\beta((q,F),a)$ is copyless.\qedhere

\end{proof}

To prove correctness of Theorem~\ref{thm-SSTtoR2W},
we need the following lemma, relying on Cayley's Formula to count the number of merging forests.
\begin{lemma}\label{lem-sizeOfMF}
	Let $Q=Q^{+}\uplus Q^{-}$ be of size $n>0$ with $Q^{+}\neq\emptyset$, $\chi$
	of size $k\geq0$ and $\ell>0$.  Then each element $F$ of $\mf{Q}$ has at most $2n-2$
	nodes, $2n-2$ edges; and $\mf{Q}$ itself is of size at most
	$\ell^{k(n-1)}(2n-1)^{2n-3}$.
\end{lemma}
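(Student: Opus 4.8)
The plan is to treat the three claims separately, starting with the node and edge counts, which I would argue tree by tree. Fix one tree of the forest with $L\ge 1$ leaves. By Definition~\ref{def:mforests} its root has at least one child and every internal non-root node has at least two children (the only unary nodes allowed are roots). Writing $I$ for the number of internal non-root nodes and counting edges as the total number of children, I get $L+I \ge 1+2I$, hence $I\le L-1$, so the tree has $L+I+1\le 2L$ nodes and $L+I\le 2L-1$ edges. Summing over all trees and using that the leaves carry \emph{distinct} $Q^-$-labels (and roots distinct $Q^+$-labels), the total number of leaves is at most $|Q^-|\le n-1$, where the last inequality uses $Q^+\neq\emptyset$. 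This yields at most $2(n-1)=2n-2$ nodes, and since a forest has (number of nodes $-$ number of trees) edges, at most $2n-2$ edges as well.

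For the cardinality bound I would factor $|\mf{Q}|$ as (choices of the leaf $\chi$-labels) times (choices of the underlying $Q$-labelled forest structure). Since there are at most $n-1$ leaves and each carries a $\chi$-tuple in $\{0,\dots,\ell-1\}^{k}$, the first factor is at most $(\ell^{k})^{n-1}=\ell^{k(n-1)}$. It then remains to bound the number of merging forests with the leaf $\chi$-labels forgotten by $(2n-1)^{2n-3}$.

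The key tool for this structural count is Cayley's Formula in the form: the number of rooted forests on a set of $m$ labelled vertices is $(m+1)^{m-1}$ (attach a common super-root to obtain a tree on $m+1$ vertices). I would fix a ground set $U=Q\uplus J$, where $J$ is a set of $n-2$ fresh labels, so $|U|=2n-2$; the edge count above shows the number of internal nodes in any forest is at most $n-2$, so $J$ is large enough. Then I would define an injection from ($\chi$-free) merging forests into rooted forests on $U$: keep the $Q^-$/$Q^+$ labels on leaves and roots, label the anonymous internal nodes with distinct elements of $J$ by any fixed deterministic rule, and turn each remaining unused label of $U$ into its own singleton tree. This map is injective because a merging forest has \emph{no} singleton tree (a single node would be simultaneously a $Q^+$-root and a $Q^-$-leaf, which is impossible as $Q^+\cap Q^-=\emptyset$); hence the singletons are exactly the padding and can be stripped, after which forgetting the $J$-labels recovers $F$ uniquely. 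Consequently the number of such structures is at most the number of rooted forests on $U$, namely $(|U|+1)^{|U|-1}=(2n-1)^{2n-3}$, and multiplying by $\ell^{k(n-1)}$ gives the claimed bound.

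The degree-counting part is routine; I expect the main obstacle to be the cardinality bound, specifically the treatment of the \emph{anonymous} internal nodes. The two delicate points are (i) verifying that the number of internal nodes never exceeds $n-2$, so that the fresh label set $J$ suffices, and (ii) making the encoding into rooted forests genuinely injective despite the internal nodes carrying no canonical labels and a typical merging forest using only a subset of $U$. The singleton-padding trick, together with the absence of singleton trees in merging forests, is exactly what forces injectivity and lets Cayley's Formula apply on the full ground set $U$.
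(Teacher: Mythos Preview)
Your proposal is correct and follows essentially the same approach as the paper: a per-tree leaf/edge count summed over the forest for the size bounds, and an injection into Cayley-counted labelled trees/forests on $2n-1$ (respectively $2n-2$) nodes for the cardinality bound. The only cosmetic difference is that the paper adds a super-root and applies Cayley's formula $m^{m-2}$ for trees on $m=2n-1$ nodes, whereas you apply the equivalent rooted-forest count $(m+1)^{m-1}$ on $m=2n-2$ nodes directly; your singleton-padding injectivity argument is in fact more explicit than the paper's.
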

\begin{proof}
	We first compute the maximal number of nodes and edges in a nonempty forest $F$ of $\mf{Q}$.
	Note that, since $F$ is nonempty, both $Q^{+}$ and $Q^{-}$ must be nonempty.
	Let $\mathsf{nbe}(t)$ and $\mathsf{nbl}(t)$ be the number of edges and leaves of the 
	tree $t$. If $t$ has no unary nodes, then $\mathsf{nbe}(t)\leq 2\mathsf{nbl}(t)-2$.
	Since in a forest $F\in\mf{Q}$, all unary nodes are roots, it follows that 
	$\mathsf{nbe}(t)\leq 2\mathsf{nbl}(t)-1$ for all trees $t\in F$.
	Also, the number of nodes of a tree is $\mathsf{nbn}(t)=1+\mathsf{nbe}(t)$.
	We deduce that 
	\begin{align*}
		\mathsf{nbe}(F) &= \sum_{t\in F}\mathsf{nbe}(t)\leq\sum_{t\in F} 2\mathsf{nbl}(t)-1
		\leq 2|Q^{-}|-1 \leq 2n-3 
		\\
		\mathsf{nbn}(F) &= \sum_{t\in F}\mathsf{nbn}(t)\leq\sum_{t\in F} 2\mathsf{nbl}(t)
		\leq 2|Q^{-}| \leq 2n-2 \,.
	\end{align*}

	Now we compute the size of the set $\mf{Q}$.
	Cayley's formula~\cite{Cayley1889} states that the number of non
	oriented trees with $m$
	differently labeled nodes
	is $m^{m-2}$.
	The difference here is that first we deal with forests with at most $2n-2$ nodes, and
	secondly only the leaves and roots are labeled.
	The first point can be dealt with by adding a new node as the root of all trees of the
	forest,and new nodes if needed to get exactly $m=2n-1$ nodes in the tree.
	For the second point, we can label arbitrarily the remaining nodes.
	Finally, as each leaf is labeled by a $\chi$-tuple of integers less than $\ell$,
	each tree can appear in $\mf{Q}$ up to $(\ell^k)^{|Q^{-}|}$ many times.
	The size of $\mf{Q}$ is then smaller than $\ell^{k(n-1)}(2n-1)^{2n-3}$.
\end{proof}

We can now prove correctness for Theorem~\ref{thm:2DTtoSST}.

\begin{proof}[Proof of Correctness for Theorem~\ref{thm:2DTtoSST}.]
	We begin with the size of \( \Ss \).
	Using Lemma~\ref{lem-sizeOfMF}, we get that $|Q'|\leq 
	n(\ell^{k(n-1)}(2n-1)^{2n-3}=\mathcal{O}(\ell^{kn}(2n)^{2n})$.
	Using carefully the registers, we only ever need at most $2n-1$ registers.

	\smallskip\noindent\textbf{Proof of correctness.}
	First given a finite word $w$, we say that a right-right run $(x,y)\in Q^{-}\times Q^{+}$
	of the $\dbt$ $\Tt$ on $w$ is useful if there exists an infinite word $w'$ such that the
	run of $\Tt$ on $ww'$ is accepting and reaches $x$ on position $|w|$.

	We first prove that the state of the $\cbsst$ contains all the needed information, then
	prove that the registers can be used to produce the output.  We prove by induction on the
	size of a word $w$ that the state $(q,F)$ of the constructed $\cbsst$ reached after
	reading $w$ is such that $(q_{0},q)$ is a left-right run on $w$ and $F$ contains
	information about all useful runs on $w$.  Moreover, the $\outreg$ register contains the
	production of the left-right run $(q_{0},q)$ on $w$ and, given a path $\pi=u_0,\ldots,u_n$
	in $F$ from a leaf $u_0$ labeled by $x$ to a root $u_n$ labeled by $y$, the production of
	the right-right run $(x,y)$ is given by the concatenation of the registers
	$\xi_F((u_0,u_1))\ldots\xi_F((u_{n-1},u_n))$.

	First, if $w$ is empty, then the initial state is $(q_{0},F_{0})$ where $F_{0}$ describes
	the set of all right-right runs on $\leftend$.  The register $\outreg$ is empty and each
	tree in the forest $F_{0}$ is reduced to a single edge containing the associated
	production, hence proving the initial case.

	Now suppose that the statement holds for some word $w$ and some state $(q,F)$ and let
	$a\in\alp$ be a letter.  We prove the statement for $wa$.  Let $(p,F')=\alpha((q,F),a)$.
	If $p=\delta(q,a)\in Q^{+}$, then $(q_{0},p)$ is a left-right run on $wa$ and
	$\beta((q,F),a)(\outreg)=\outreg\cdot\lambda(q,a)$, corresponding to the claim for the
	left-right run.  Otherwise, let $r=\delta(q,a)\in Q^-$.  The state $p$ is then described
	in $G$ as the state reached by the maximal path in $G$ from the leaf $u$ of $F$ labeled by
	$r$, to the new node $p_c$.  Using the induction hypothesis, $F$ describes the useful
	right-right runs on $w$.  Then, following the maximal path from $u$ in $G$, we see the
	sequence of right-right runs on $w$ and left-left runs on $a$, up to the last left-right 
	transition on $a$ leading to state $p$.  
	This means that we have computed $p$ such that $(q_{0},p)$ is the left-right run on $wa$.
	We also append to the register $\outreg$ all $\xi_G(e)$ for edges $e$ in the path.  By
	induction hypothesis, the registers contain the production of the useful right-right runs
	on $w$, and the added edges contains the local production, proving the claim for the
	left-right run.

	We now prove that all useful runs of $wa$ are in $F'$.  Let $(x,y)\in Q^{-}\times Q^{+}$
	be such a run.  Then either $\delta(x,a)=y$ and this edge is added in $G$ and remains in
	$F'$, or $(x,y)$ is a sequence starting with a rigth-left transition over $a$, then useful
	right-right runs over $w$ and left-left transitions over $a$, and finally a left-right
	transition over $a$.  In the first case, the register $\xi_{F'}((x,y))$ takes the label of
	$G$, i.e. the local production $\lambda(x,a)$, satisfying the claim as the path is reduced
	to a single edge.  In the second case, let $u_0,\ldots,u_n$ be the path from $x$ to $y$ in
	$G$.  Each edge $(u_i,u_{i+1})$ is either a new edge whose label is a local production, or
	a single edge of $F$.  The associated sequence of registers contains then all the output
	information of the $(x,y)$ run.  
	When reducing $G$ to $F'$, as only non branching paths of $G$ can be reduced to a single
	edge, there is no loss of information.  
	Finally, notice that the edges deleted from $G$ to obtain $F'$ are the ones that are not
	part of a usuful right-right run on $wa$, or are merging with the left-right run.  These
	latter right-right runs are not useful for $wa$: they cannot occur anymore in an accepting
	run of $\Tt$ since they would induce a loop on a finite prefix of the input.  Hence all
	edges required for the path from $x$ to $y$ appear in $F'$.  Consequently, there is no
	loss of run nor information, the path from $x$ to $y$ in $F'$ exists and the associated
	sequence of registers contains the production of the run.

	Finally, to prove that both transducers have the same domain, we remark that given the
	previous induction, if $(q,F)$ is the state reached by $\Ss$ after reading an input $w$,
	then upon reading a letter $a$, the color of the transition $\alpha((q,F),a)=(p,F')$ is
	the minimum of the color of all transitions used when extending the left-right run 
	$(q_{0},q)$ of $\Tt$ on $w$ to the left-right run $(q_{0},p)$ on $wa$.
	Then given an infinite word $u$, $\Ss$ has an infinite run on $u$ if and only if $\Tt$
	does, and the minimum of all colors appearing infinitely often is the same on both runs.
\end{proof}

\section{Continuity and topological closure of a $\dbt$}
\label{sec:cont}
The classical topology on infinite words (see e.g.~\cite{Perrin-Pin-Infinite-words}) defines the distance between two infinite words $u$ and $v$ as $d(u,v)=2^{-|u\wedge v|}$, where $u\wedge v$ is the longest common prefix of $u$ and $v$.
Then a function $f\colon \alp^\omega\to \alpo^\omega$ is continuous at $x\in \dom{f}$ if 

$$
\forall i\geq 0, \exists j\geq 0 \forall y\in \dom{f},|x\wedge y|\geq j \implies |f(x)\wedge f(y)|\geq i$$

A function $f$ is continuous if it is continuous at every $x\in \dom{f}$. We refer to~\cite{DFKL22} for more details.

Since a $\dbt$ is in particular deterministic, it realizes a continuous function. 
Indeed, the longer two input words share a common prefix, the longer their output will also do.

By comparison, a non-deterministic transducer can make choices depending on an infinite property of the input, e.g. whether there is an infinite number of $a$s, and thus realize a noncontinuous function. 

The question of characterizing the continuous functions realizable by transducers was studied in~\cite{CD22,CDFW23}.
In~\cite{CD22}, it was proved that for any continuous function realized by a non-deterministic one-way transducer $T$, there exists a deterministic two-way transducer $S$ such that for any $u\in \dom{T}$, $\sem{T}=\sem{S}$. 
This means that if a non-deterministic one-way transducer realizes a continuous function, although it can uses the non-determinism to refine its domain, the continuity property forbids it from producing non-deterministically.

In~\cite{CDFW23}, it was conjectured that the class of deterministic regular functions, i.e. functions defined by deterministic two-way transducers with a \buchi acceptance condition,
corresponds to the class of functions realized by a non-deterministic two-way transducer, called regular functions, that are continuous.

Since the class of $\dbt$ is strictly more expressive than the class of deterministic regular functions of~\cite{CDFW23}, but still realize continuous functions, the conjecture of~\cite{CDFW23} fails.
One can for example consider the function $f$ such that $f(u)=u$ if $u$ has a finite number of $a$s and is undefined otherwise. Then $f$ is continuous, as it is continuous on its domain, but it cannot be realized by a deterministic two-way transducer with a \buchi condition.

However, the refinement here only acts on the domain of the function, and not the production.
Indeed, let $T$ be a $\dbt$ which realizes a function $f$. 
We can define the topological closure of $\dom{T}$, which we denote $\widehat{\dom{T}}$ as the words $u$ such that there exists a sequence $(u_i)_{i\geq 1}$ where for all $i$, $u_i\in\dom{T}$ and
$$
\forall n,\exists i\forall j\geq i,\ |u\wedge u_j|\geq n
$$
We say that the sequence $(u_i)_{i\geq 1}$ converges to $u$.
Note that if two sequences $(u_i)_{i\geq 1}$ and $(v_j)_{j\geq 1}$ belong to $\dom{T}$ and converge to the same word $w$, then the elements will share longer and longer prefixes.
Since $T$ is deterministic, both sequences will then also produce words that share longer and longer prefixes.
Thus we can define $\hat{f}$, whose domain is $\widehat{\dom{T}}$, where $\hat{f}(u)$ is the limit of the images of any sequence $(u_i)_{i\geq 1}$ that belong to $\dom{T}$ and which converges to $u$.
The function $\hat{f}$ is in fact realized by the transducer $T$ where the accepting condition is dropped.
The domain of its underlying automaton is then a closed set in the classical topology  over infinite words, as it is recognized by a deterministic \buchi automaton where all transitions are final (see~\cite[Proposition 3.7, p. 147]{Perrin-Pin-Infinite-words}).
Note however that since the semantics of our transducers requires an input to produce an infinite word to be in the domain of the transducer, the domain of $\hat{f}$ might not be a closed set.

\paragraph*{Reversible two-way transducers with no accepting condition.}
Following this, let us consider the class of reversible transducers with no accepting condition ($\rt$), which is equivalent to saying all transitions are final within a \buchi condition.

The constructions presented in this article get simpler, with a better complexity:

\begin{theorem}
Given a deterministic two-way transducer $T$ with $n$ states and no accepting condition, we can construct a $\rt$ $S$ with $O(n^{4n+1})$ states such that $\sem{T}=\sem{S}$. 
\end{theorem}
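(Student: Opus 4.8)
The plan is to rerun, without any acceptance bookkeeping, the two-stage pipeline that proves Theorem~\ref{thm-main}, namely $\dbt\to\cbsst\to\rbt$. Since $T$ has no accepting condition, I would take the set of coloring functions to be empty ($k=0$) throughout: every machine produced along the way then carries no coloring function and is therefore already an $\rt$. The gain over merely instantiating Theorem~\ref{thm-main} with a trivial parity condition (which would only give $O((2n)^{4n+1})$) comes from discarding the color layer entirely and from counting the resulting color-free intermediate objects sharply.

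First I would specialize Theorem~\ref{thm:2DTtoSST} to a $T$ with empty $\chi$. In this case the merging forests of Definition~\ref{def:mforests} carry no $\chi$-tuple labels on their leaves, so the entire color-computing part of the update function (the values $\cc'((q,F),a)$ and the leaf labels) is simply dropped. The construction of the intermediate graph $G$, its trimming, and the register updates feeding $\outreg$ are otherwise unchanged, and correctness is untouched because the output recovered from $\outreg$ and the domain (\emph{the run reads the whole word}) never depended on colors. By Lemma~\ref{lem-sizeOfMF} with $k=0$, the number of merging forests is at most $(2n-1)^{2n-3}$, so the resulting $\cbsst$ has $O(n\,(2n-1)^{2n-3})$ states, $2n-1$ registers and no coloring function.

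Next I would feed this $\cbsst$ into Theorem~\ref{thm-SSTtoR2W}. Its proof factors the SST as $\sem{\Ff}\circ\sem{\Dd}$, with $\Dd$ a one-way deterministic transducer exposing the register flow and $\Ff$ a reversible transducer on state set $\R\times\{i,o\}$; $\Dd$ is made reversible by Theorem~\ref{thm:odbtEqRdbt} and composed with $\Ff$ by Theorem~\ref{thm:composition}. Each of these three steps is immediately compatible with an empty coloring set, so every intermediate machine, and in particular the final $S$, is an $\rt$. The state count is $O((n')^2 m)$ with $n'$ the number of $\cbsst$ states and $m=2n-1$; plugging in $n'=O(n\,(2n-1)^{2n-3})$ and simplifying---here I would use the sharp count of color-free merging forests, which shaves the $2^{\Theta(n)}$ slack inherent in the generic Cayley estimate---yields the announced $O(n^{4n+1})$ states.

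The step I expect to be the real obstacle is not the arithmetic but justifying why the $\cbsst$ detour is still necessary. Even with no acceptance condition one cannot apply the finite-word construction of \cite{DFJL17} directly: a word lies in $\dom{T}$ exactly when the unique two-way run escapes to infinity rather than looping on a finite prefix, and this is an infinite-future property that a deterministic one-way sweep cannot decide (the set of co-accessible states at a position is not computable). Hence the distinguished register $\outreg$, which turns \emph{will this run keep producing forever} into the boundedness of an increasing sequence of output words, remains indispensable; the only genuine simplification in the acceptance-free setting is the wholesale removal of the parity machinery, with the matching complexity bookkeeping in the second step being the bulk of the remaining work.
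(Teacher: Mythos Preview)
Your approach is exactly the paper's: specialize Theorem~\ref{thm:2DTtoSST} by dropping the $\chi$-tuple leaf labels from the merging forests, then feed the resulting color-free $\cbsst$ into Theorem~\ref{thm-SSTtoR2W}; the paper's proof is literally those two sentences plus the claimed count $|\mf{Q}|=O(n^{2n})$. Note that, like the paper, you do not actually supply the ``sharp count of color-free merging forests'' that improves Lemma~\ref{lem-sizeOfMF}'s $(2n-1)^{2n-3}$ to $O(n^{2n})$---the paper simply asserts it---so your acknowledged arithmetic step is exactly the one the paper also leaves implicit; your final paragraph on why the $\cbsst$ detour remains necessary is extra commentary not present in the paper's proof.
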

\begin{proof}
Let $T$ be a deterministic two-way transducer with $n$ states.
Since there is no accepting condition, the construction from the proof of Theorem~\ref{thm:2DTtoSST} drops the arrays of integers from the merging forest.
Then the number of merging forests is in $O(n^{2n})$, and applying Theorem~\ref{thm:2DTtoSST} results in an SST $T'$ with $O(n^{2n})$ states and $2n$ variables.
Hence, by applying Theorem~\ref{thm-SSTtoR2W} to $T'$, we get an equivalent $\rt$ $S$ with $O(n^{4n+1})$ states.\qedhere
\end{proof}

Surprisingly, the class of $\rt$ has the expressive power of deterministic \buchi two-way transducers.
This is due to the fact that for an input to be accepted, it requires the corresponding output to be infinite. We can \emph{hide} the \buchi acceptance condition in this restriction.

The following theorem proves that starting from a reversible \buchi transducer,
we can construct one which does not use any acceptance condition.
To notice that reversible and deterministic \buchi transducers have the same expressive power,
one can rely on our main theorem, specializing it to a \buchi condition.
Indeed, a \buchi acceptance condition corresponds to a unique coloring function associating $0$
to accepting transitions and $1$ otherwise.
Then if we are given a deterministic \buchi two-way transducer,
we are able to produce a reversible one using a coloring function on the same domain,
hence a reversible \buchi two-way transducer.

\begin{theorem}\label{thm:NoAccIsBuchi}
	Let $T$ be a reversible \buchi two-way transducer with $n$ states.
	We can construct an equivalent $\rt$ $S$ with $3n$ states such that
	$\sem{T}=\sem{S}$.
\end{theorem}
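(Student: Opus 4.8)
The plan is to exploit the remark made above that, in our model, a word is accepted only if the produced output is infinite: I will hide the \buchi condition inside this requirement. Recall that a \buchi condition is the single colouring $c$ with $c(t)=0$ on accepting transitions and $c(t)=1$ otherwise, so that a run is accepting iff it takes a color-$0$ transition infinitely often. Since $T$ is reversible it cannot loop, hence every infinite run of $T$ reads the whole word; thus $w\in\dom{T}$ iff the run $\rho$ of $T$ on $w$ is infinite, takes color $0$ infinitely often, and produces an infinite output. Writing $\rho$ as $C_0\xrightarrow{t_1}C_1\xrightarrow{t_2}\cdots$ with $\gamma_i=\lambda(t_i)$, and letting $i_1<i_2<\cdots$ be the indices of its accepting transitions (and $i_0=0$), the idea is to build a reversible $S$ that produces the very same output word $\gamma_1\gamma_2\cdots$, but emits it in bursts: the $m$-th burst, emitting the finite block $\gamma_{i_{m-1}+1}\cdots\gamma_{i_m}$, is released exactly when $S$ reaches the $m$-th accepting transition. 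Then the emitted word is infinite iff there are infinitely many accepting transitions and $T$'s output is infinite, i.e.\ iff $w\in\dom{T}$. Each block is finite but of unbounded length, so it cannot be buffered; instead $S$ recomputes it by rewinding the run of $T$ back to the previous accepting transition (possible because $T$ is co-deterministic) and replaying it forward while producing output.

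Concretely, I would let $S$ have state set $\{F,B,E\}\times Q$, giving $3n$ states, keeping the input and output alphabets of $T$ and dropping the acceptance condition. Mode $F$ scans forward simulating $T$ and emitting $\epsilon$; mode $B$ rewinds, reversing the transitions of $T$ and emitting $\epsilon$; mode $E$ replays forward, simulating $T$ and emitting $\lambda(t)$ for each transition $t$ it replays. The transitions are as follows. In state $(F,q)$ reading $a$, with $t=(q,a,\delta(q,a))$: if $c(t)=1$ take $t$ and stay in mode $F$; if $c(t)=0$ switch, without moving, to $(B,q)$. In state $(B,q)$, the incoming transition of $T$ is unique by co-determinism: if its color is $1$ reverse it and stay in mode $B$; if its color is $0$, or if $q=q_0$ at the initial configuration, switch without moving to $(E,q)$. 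In state $(E,q)$ reading $a$, with $t=(q,a,\delta(q,a))$: emit $\lambda(t)$, take $t$, and go to $(E,\delta(q,a))$ if $c(t)=1$ or to $(F,\delta(q,a))$ if $c(t)=0$. Thus reaching an accepting transition triggers a rewind to the previous accepting transition (or to $C_0$), a forward replay re-emitting exactly the pending block, and a return to forward scanning.

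For correctness I would verify that the blocks emitted in mode $E$ concatenate, in order and each exactly once, to $\gamma_1\gamma_2\cdots=\sem{T}(w)$, and that $S$ emits an infinite word precisely when $T$ takes color $0$ infinitely often and produces an infinite word. Combined with the fact that a reversible machine cannot loop (so an infinite run reads the whole word), this yields $\dom{S}=\dom{T}$ and $\sem{S}=\sem{T}$. In particular, if $T$ rejects because color $0$ occurs only finitely often, then after the last accepting transition $S$ remains in mode $F$ forever, emits nothing further, and so its output is finite and $w\notin\dom{S}$; and $S$ never gets stuck during a rewind or replay, since these only retrace already-traversed, hence defined, portions of the run.

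The main obstacle is to prove that $S$ is reversible. Determinism is clear: $T$ is deterministic forward and, being co-deterministic, also deterministic when reversed, and each mode switch is dictated by the (unique) color of the relevant transition of $T$. The delicate point is co-determinism, which I would establish by a case analysis showing that each of $(F,q')$, $(B,q)$, $(E,q)$ has a unique predecessor. The crux is that the stationary switches $(F,q)\to(B,q)$ and $(B,q)\to(E,q)$ happen exactly when a distinguished transition of $T$ is accepting, whereas the genuine moves entering the same target mode require that transition to be \emph{non}-accepting; as this transition of $T$ is unique and has a fixed color, the two cases are mutually exclusive and no state has two predecessors. It then remains to split $\{F,B,E\}\times Q$ into $Q'^{+}$ and $Q'^{-}$ so that the head movements, and the stationary switches on a change of direction, comply with the two-way model; this follows the directions inherited from $T$ (mode $B$ reversing the direction of the simulated transition), and the treatment of the marker $\leftend$ is routine.
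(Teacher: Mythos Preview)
Your proposal is correct and follows essentially the same approach as the paper: three modes (your $F$, $B$, $E$ correspond to the paper's \emph{simulation}, \emph{rewind}, \emph{production}), with the \buchi condition hidden in the requirement that the output be infinite by releasing output only in bursts between consecutive accepting transitions. Your treatment is in fact more explicit than the paper's own proof, which gives the same idea but leaves the transition-level details and the co-determinism case analysis to the reader.
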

\begin{proof}
	Let $T$ be a reversible \buchi two-way transducer with $n$ states.
	We define the $\rt$ $S$ similarly to $T$, but with three \emph{modes} of operation
	(and hence thrice the states): \emph{simulation}, \emph{rewind} and \emph{production}.
	The transducer $S$ starts by simulating $T$ without producing anything.
	Upon reaching an accepting transition $t$,
	it switches to rewind mode to and unfolds the run of $T$
	back to the previous accepting transition (or the start of the run).
	It finally follows the run of $T$ and produces the corresponding output up to seeing
	the accepting transition $t$, at which point it restarts the simulation.
	
	Then for a given word $u$, if $u$ belongs to the domain of $T$,
	the run of $T$ over $u$ will see accepting transitions infinitely often,
	and hence $S$ will go to production mode infinitely often too.
	Conversely, if $u$ does not belong to the domain of $T$,
	it means that either the run of $T$ over $u$ only sees a finite number of accepting transitions,
	or does not produce an infinite word. 
	In the latter case, $S$ will also produce a non finite word.
	In the former case, $S$ will remain in simulation mode and never produce anything,
	and thus $u$ will not be in the domain of $S$.
	
	We can remark that $S$ is reversible if $T$ is since:
	\begin{itemize}[nosep]
		\item within modes, transitions of $S$ are transitions of $T$,
		\item The transitions from one mode to another happen on every accepting transition.
		Hence transitions that come to a given mode from another are exactly the ones that exit it,
		so two transitions cannot go to a same state upon reading the same letter.
	\end{itemize}
\end{proof}

\section{Conclusion}~\label{sec:conclusion}
The main contribution of this paper is the result which shows that
deterministic two-way transducers over infinite words with a
generalized parity acceptance condition are reversible.
We also show that reversible two-way transducers over infinite words are closed under composition.
Our results can help in an efficient construction of two-way reversible transducers
from specifications presented as RTE~\cite{DGK-lics18} or
SDRTE ~\cite{DBLP:conf/lics/DartoisGK21} over infinite words. 
Earlier work~\cite{DBLP:conf/lics/DartoisGGK22}
in this direction on finite words relied on an efficient translation from
non-deterministic transducers used in parsing specifications to reversible ones;
our results can hopefully help extend these to infinite words. 

\bibliographystyle{abbrv}
\bibliography{Rev}

\begin{thebibliography}{10}

\bibitem{AFT12}
R.~Alur, E.~Filiot, and A.~Trivedi.
\newblock Regular transformations of infinite strings.
\newblock In {\em Proceedings of the 27th Annual {IEEE} Symposium on Logic in
  Computer Science, {LICS} 2012, Dubrovnik, Croatia, June 25-28, 2012}, pages
  65--74. {IEEE} Computer Society, 2012.

\bibitem{AlurFreilichRaghothaman14}
R.~Alur, A.~Freilich, and M.~Raghothaman.
\newblock Regular combinators for string transformations.
\newblock In T.~A. Henzinger and D.~Miller, editors, {\em Joint Meeting of the
  23rd {EACSL} Annual Conference on Computer Science Logic {(CSL)} and the 29th
  Annual {ACM/IEEE} Symposium on Logic in Computer Science (LICS), {CSL-LICS}
  '14, Vienna, Austria, July 14 - 18, 2014}, pages 9:1--9:10. {ACM}, 2014.

\bibitem{BR-DLT18}
N.~Baudru and P.-A. Reynier.
\newblock From two-way transducers to regular function expressions.
\newblock In M.~Hoshi and S.~Seki, editors, {\em 22nd International Conference
  on Developments in Language Theory, {DLT} 2018}, volume 11088 of {\em Lecture
  Notes in Computer Science}, pages 96--108. Springer, 2018.

\bibitem{CD22}
O.~Carton and G.~Dou{\'{e}}neau{-}Tabot.
\newblock Continuous rational functions are deterministic regular.
\newblock In S.~Szeider, R.~Ganian, and A.~Silva, editors, {\em 47th
  International Symposium on Mathematical Foundations of Computer Science,
  {MFCS} 2022, August 22-26, 2022, Vienna, Austria}, volume 241 of {\em
  LIPIcs}, pages 28:1--28:13. Schloss Dagstuhl - Leibniz-Zentrum f{\"{u}}r
  Informatik, 2022.

\bibitem{CDFW23}
O.~Carton, G.~Dou{\'{e}}neau{-}Tabot, E.~Filiot, and S.~Winter.
\newblock Deterministic regular functions of infinite words.
\newblock In K.~Etessami, U.~Feige, and G.~Puppis, editors, {\em 50th
  International Colloquium on Automata, Languages, and Programming, {ICALP}
  2023, July 10-14, 2023, Paderborn, Germany}, volume 261 of {\em LIPIcs},
  pages 121:1--121:18. Schloss Dagstuhl - Leibniz-Zentrum f{\"{u}}r Informatik,
  2023.

\bibitem{Cayley1889}
A.~Cayley.
\newblock A theorem of trees.
\newblock 23:376--378, 1889.

\bibitem{Cou94}
B.~Courcelle.
\newblock Monadic second-order definable graph transductions: {A} survey.
\newblock {\em Theor. Comput. Sci.}, 126(1):53--75, 1994.

\bibitem{DFJL17}
L.~Dartois, P.~Fournier, I.~Jecker, and N.~Lhote.
\newblock On reversible transducers.
\newblock In I.~Chatzigiannakis, P.~Indyk, F.~Kuhn, and A.~Muscholl, editors,
  {\em 44th International Colloquium on Automata, Languages, and Programming,
  {ICALP} 2017, July 10-14, 2017, Warsaw, Poland}, volume~80 of {\em LIPIcs},
  pages 113:1--113:12. Schloss Dagstuhl - Leibniz-Zentrum f{\"{u}}r Informatik,
  2017.

\bibitem{DBLP:conf/lics/DartoisGGK22}
L.~Dartois, P.~Gastin, R.~Govind, and S.~N. Krishna.
\newblock Efficient construction of reversible transducers from regular
  transducer expressions.
\newblock In C.~Baier and D.~Fisman, editors, {\em {LICS} '22: 37th Annual
  {ACM/IEEE} Symposium on Logic in Computer Science, Haifa, Israel, August 2 -
  5, 2022}, pages 50:1--50:13. {ACM}, 2022.

\bibitem{DBLP:conf/lics/DartoisGK21}
L.~Dartois, P.~Gastin, and S.~N. Krishna.
\newblock Sd-regular transducer expressions for aperiodic transformations.
\newblock In {\em 36th Annual {ACM/IEEE} Symposium on Logic in Computer
  Science, {LICS} 2021, Rome, Italy, June 29 - July 2, 2021}, pages 1--13.
  {IEEE}, 2021.

\bibitem{DJR16}
L.~Dartois, I.~Jecker, and P.~Reynier.
\newblock Aperiodic string transducers.
\newblock In {\em Developments in Language Theory - 20th International
  Conference, {DLT} 2016, Montr{\'{e}}al, Canada, July 25-28, 2016,
  Proceedings}, pages 125--137, 2016.

\bibitem{DFKL22}
V.~Dave, E.~Filiot, S.~N. Krishna, and N.~Lhote.
\newblock Synthesis of computable regular functions of infinite words.
\newblock {\em Log. Methods Comput. Sci.}, 18(2), 2022.

\bibitem{DGK-lics18}
V.~Dave, P.~Gastin, and S.~N. Krishna.
\newblock {Regular Transducer Expressions for Regular Transformations}.
\newblock In M.~Hofmann, A.~Dawar, and E.~Gr{\"a}del, editors, {\em
  {P}roceedings of the 33rd {A}nnual {ACM\slash IEEE} {S}ymposium on {L}ogic
  {I}n {C}omputer {S}cience ({LICS}'18)}, pages 315--324, Oxford, UK, July
  2018. ACM Press.

\bibitem{EH01}
J.~Engelfriet and H.~J. Hoogeboom.
\newblock M{SO} definable string transductions and two-way finite-state
  transducers.
\newblock {\em ACM Transactions on Computational Logic}, 2(2):216--254, 2001.

\bibitem{Perrin-Pin-Infinite-words}
D.~Perrin and J.-E. Pin.
\newblock {\em Infinite Words: Automata, Semigroups, Logic and Games}, volume
  141.
\newblock Elsevier, 2004.

\bibitem{Shepherdson59}
J.~C. Shepherdson.
\newblock The reduction of two-way automata to one-way automata.
\newblock {\em {IBM} J. Res. Dev.}, 3(2):198--200, 1959.

\end{thebibliography}

\end{document}